\newtheorem{theorem}{Theorem}[section]
\newtheorem{definition}{Definition}[section]
\newtheorem{lemma}[theorem]{Lemma}
\newtheorem{proposition}[theorem]{Proposition}
\newtheorem{remark}{Remark}[section]
\def\f#1{{\mathbb{F}}_{#1}}
\title{New Constructions of MDS Symbol-Pair Codes}
\author{Baokun Ding$^{\text{a}}$, Gennian Ge$^{\text{b,c,}}$\thanks{Corresponding author (e-mail: gnge@zju.edu.cn). Research supported by the National Natural Science Foundation of China under Grant Nos. 11431003 and 61571310.}, Jun Zhang$^{\text{b}}$, Tao Zhang$^{\text{a}}$ and Yiwei Zhang$^{\text{b}}$\\
\footnotesize $^{\text{a}}$ School of Mathematical Sciences, Zhejiang University, Hangzhou 310027, China\\
\footnotesize $^{\text{b}}$ School of Mathematical Sciences, Capital Normal University, Beijing 100048, China\\
\footnotesize $^{\text{c}}$ Beijing Center for Mathematics and Information Interdisciplinary Sciences, Beijing 100048, China.\\}
\begin{document}

\date{}\maketitle
\begin{abstract}
Motivated by the application of high-density data storage technologies, symbol-pair codes are proposed to protect against pair-errors in symbol-pair channels, whose outputs are overlapping pairs of symbols. The research of symbol-pair codes with the largest minimum pair-distance is interesting since such codes have the best possible error-correcting capability. A symbol-pair code attaining the maximal minimum pair-distance is called a maximum distance separable (MDS) symbol-pair code. In this paper, we focus on constructing linear MDS symbol-pair codes over the finite field $\mathbb{F}_{q}$. We show that a linear  MDS symbol-pair code over $\mathbb{F}_{q}$  with pair-distance $5$ exists if and only if the length $n$ ranges from $5$ to $q^2+q+1$. As for codes with  pair-distance $6$, length ranging from $6$ to $q^{2}+1$, we construct linear MDS symbol-pair codes by using a configuration called ovoid in projective geometry. With the help of elliptic curves, we present a construction of linear MDS symbol-pair codes for any pair-distance $d+2$ with length $n$ satisfying $7\le d+2\leq n\le q+\lfloor 2\sqrt{q}\rfloor+\delta(q)-3$, where $\delta(q)=0$ or $1$.
\end{abstract}
\medskip
\noindent {{\it Key words\/}: Symbol-pair read channels, MDS symbol-pair codes, projective geometry, elliptic curves.}
\noindent {\it Mathematics subject classifications\/}: 94B25, 94B60. 
\smallskip
\section{Introduction}
With the development of high-density data storage technologies, while the codes are defined as usual over some discrete symbol alphabet, their reading from the channel is performed as overlapping pairs of symbols. A channel whose outputs are overlapping pairs of symbols is called a symbol-pair channel. A pair-error is defined as a pair-read in which one or more of the symbols are read in error. The design of codes to protect efficiently against a certain number of pair-errors is significant.

Cassuto and Blaum first studied codes that protect against pair-errors in \cite{CB}, as well as pair-error correctability conditions, code construction and decoding, and lower and upper bounds on code sizes.  Later, Cassuto and Litsyn \cite{CL} gave algebraic cyclic code constructions of symbol-pair codes and asymptotic bounds on code rates. They also showed the existence of pair-error codes with rates  strictly higher than those of the codes in the Hamming metric with the same relative distance. Yaakobi et al.\ proposed efficient decoding algorithms for cyclic symbol-pair codes in \cite{YBS,YBH16}.

Chee et al.\ in \cite{CJKWY} established a Singleton-type bound on symbol-pair codes and constructed infinite families of symbol-pair codes that meet the Singleton-type bound, which are called maximum distance separable symbol-pair codes or MDS symbol-pair codes for short. The construction of MDS symbol-pair codes is interesting since the codes have the best pair-error correcting capability for fixed length and dimension. The authors in \cite{CJKWY} made use of interleaving and graph theoretic concepts as well as combinatorial configurations to construct MDS symbol-pair codes. Kai et al.\ \cite{KZL} constructed MDS symbol-pair codes from cyclic and constacyclic codes.

Classical MDS codes are MDS symbol-pair codes \cite{CJKWY} and other known families of MDS $(n,d)_{q}$ symbol-pair codes are shown in Table \ref{tab}.

\begin{table}[h]\caption{Known families of MDS symbol-pair codes\label{tab}}
\begin{center}
\begin{tabular}{|c|c|c|c|}
  \hline
  $d$ & $q$ & $n$ & Reference \\\hline
  $2$, $3$ & $q\geq2$ & $n\geq2$ &\cite{CJKWY}  \\\hline
  $4$& $q\geq 2$ & $n\geq 2$ & \cite{CJKWY} \\\hline
     & even prime power &$n\leq q+2$&\cite{CJKWY}\\\cline{2-4}
     & odd prime &$5\leq n\leq 2q+3$&\cite{CJKWY}\\\cline{2-4}
  $5$& prime power &$n|q^{2}-1$, $n>q+1$&\cite{KZL}\\\cline{2-4}
  & prime power &$n=q^{2}+q+1$&\cite{KZL}\\\cline{2-4}
  & prime power, $q\equiv 1\pmod 3$ &$n=\frac{q^{2}+q+1}{3}$&\cite{KZL}\\\hline
  \multirow{2}{*}{6}&prime power &$n=q^{2}+1$&\cite{KZL}\\\cline{2-4}
  &odd prime power& $n=\frac{q^{2}+1}{2}$ &\cite{KZL}\\\hline
  $7$ &odd prime&$n=8$&\cite{CJKWY}\\
  \hline
\end{tabular}
\end{center}
\end{table}

In this paper, we present new constructions of linear MDS symbol-pair codes over the finite field $\mathbb{F}_{q}$ and obtain the following three new families:
\begin{enumerate}
  \item there exists a linear MDS $(n,5)_q$ symbol-pair code if and only if $5\le n \le q^2+q+1$;
  \item there exists a linear MDS $(n,6)_{q}$ symbol-pair code for $q\geq3$ and $6\leq n\leq q^{2}+1$;
  \item there exists a linear MDS $(n,d+2)_{q}$ symbol-pair code for  general $n,d$ satisfying  $7\le d+2\leq n\le q+\lfloor 2\sqrt{q}\rfloor+\delta(q)-3$, where
      \[
\delta(q)=\left\{
  \begin{array}{ll}
    $0$, & \hbox{if $q=p^a,\,a\ge 3$, $a$ odd and $p\,|\,\lfloor 2\sqrt{q}\rfloor$;} \\
    $1$, & \hbox{otherwise.}
  \end{array}
\right.
\]
\end{enumerate}
Compared with the known MDS symbol-pair codes, the  MDS symbol-pair codes constructed in this paper provide a much larger range of parameters.

This paper is organized as follows. Basic notations and definitions are given in Section \ref{sec2}. In Section \ref{sec3}, we construct MDS symbol-pair codes with pair-distance $5$. And in Section \ref{sec4} we derive MDS symbol-pair codes with pair-distance $6$  from projective geometry. In Section \ref{sec5}, by using elliptic curves, we give the construction of MDS symbol-pair codes for any pair-distance satisfying certain conditions. Section \ref{sec6} concludes the paper.

\section{Preliminaries}\label{sec2}
Let $\Sigma$ be the alphabet consisting of $q$ elements.  Each element in $\Sigma$ is called a symbol. For a vector ${\bf u}=(u_{0},u_{1},\cdots,u_{n-1})$ in $\Sigma^{n}$, we define the symbol-pair read vector of ${\bf u}$ as
$$\pi({\bf u})=((u_{0},u_{1}),(u_{1},u_{2}),\cdots,(u_{n-1},u_{0})).$$

Throughout this paper, let $q$ be a prime power and $\mathbb{F}_{q}$ be the finite field containing $q$ elements. We will focus on vectors over $\mathbb{F}_{q}$, so $\Sigma=\mathbb{F}_{q}$.
It is obvious that each vector ${\bf u}$ in $\mathbb{F}_{q}^{n}$ has a unique symbol-pair read vector $\pi({\bf u})$ in $(\mathbb{F}_{q} \times \mathbb{F}_{q})^{n}$. For two vectors ${\bf u}$, ${\bf v}$ in $\mathbb{F}_{q}^{n}$, the pair-distance between ${\bf u}$ and ${\bf v}$ is defined as
$$d_{p}({\bf u},{\bf v}):=|\lbrace0\le i\le n-1:(u_{i},u_{i+1})\ne (v_{i},v_{i+1})\rbrace|,$$
where the subscripts are reduced modulo $n$. And for any vector ${\bf u}$ in $\mathbb{F}_{q}^{n}$, the pair-weight of ${\bf u}$ is defined as
$$w_{p}({\bf u})=|\lbrace0\le i\le n-1:(u_{i},u_{i+1})\ne (0,0)\rbrace|,$$ where the subscripts are reduced modulo $n$.

The following relationship between the pair-distance and the Hamming distance was shown in \cite{CB}.
\begin{proposition}\label{prop1}
  Let ${\bf u},{\bf v}\in \mathbb{F}_{q}^n$ be such that $0<d_{H}({\bf u},{\bf v})<n$, where $d_{H}$ denotes the Hamming distance, we have $$d_{H}({\bf u},{\bf v})+1\leq d_{p}({\bf u},{\bf v})\leq 2d_{H}({\bf u},{\bf v}).$$
\end{proposition}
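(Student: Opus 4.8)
The plan is to reduce the statement to one about a single vector via translation invariance, and then analyse the cyclic support of that vector combinatorially.

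First I would observe that both quantities depend only on the difference $\mathbf{w}=\mathbf{u}-\mathbf{v}$: indeed $(u_i,u_{i+1})\ne(v_i,v_{i+1})$ holds exactly when $(w_i,w_{i+1})\ne(0,0)$, so $d_p(\mathbf{u},\mathbf{v})=w_p(\mathbf{w})$, and likewise $d_H(\mathbf{u},\mathbf{v})=w_H(\mathbf{w})$, where $w_H$ denotes the Hamming weight. It therefore suffices to prove, for every $\mathbf{w}\in\mathbb{F}_q^{\,n}$ with $0<w_H(\mathbf{w})<n$, the two inequalities $w_H(\mathbf{w})+1\le w_p(\mathbf{w})\le 2\,w_H(\mathbf{w})$.

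The key structural step is to pass to the support. Writing $S=\{i:w_i\ne0\}$ (so that $w_H(\mathbf{w})=|S|$), I note that the $i$-th pair $(w_i,w_{i+1})$ is nonzero precisely when $i\in S$ or $i+1\in S$; hence the set of nonzero pair-positions is $S\cup(S-1)$, where $S-1=\{\,j-1 \bmod n : j\in S\,\}$, and consequently $w_p(\mathbf{w})=|S\cup(S-1)|$. Now I would exploit the cyclic structure: since $0<|S|<n$, the set $S$ is a nonempty proper subset of $\mathbb{Z}/n\mathbb{Z}$, so it decomposes uniquely into $t\ge1$ maximal cyclic runs (arcs) of consecutive indices separated by gaps. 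An arc of length $\ell$, say $\{a,a+1,\dots,a+\ell-1\}$, contributes to $S\cup(S-1)$ exactly the $\ell+1$ positions $\{a-1,a,\dots,a+\ell-1\}$, namely its own $\ell$ indices together with the single gap position $a-1$ immediately preceding it; and because distinct arcs are separated by at least one gap position, these contributions are pairwise disjoint. Summing gives the exact identity $w_p(\mathbf{w})=w_H(\mathbf{w})+t$. Since $t\ge1$ this yields the lower bound, and since each arc has length at least $1$ we have $t\le|S|=w_H(\mathbf{w})$, which yields the upper bound.

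The main subtlety, and the only place an error could creep in, is handling the cyclic wrap-around cleanly: both the arc decomposition and the disjointness of the "preceding gap positions" rely crucially on $S$ being a nonempty \emph{proper} subset of $\mathbb{Z}/n\mathbb{Z}$, which is exactly the hypothesis $0<d_H(\mathbf{u},\mathbf{v})<n$. Without it — for instance if $S$ were the whole cycle — there are no gaps, the arc accounting fails, and indeed the lower bound itself becomes false (one checks $w_p=n=w_H<w_H+1$). I would therefore be careful to invoke both halves of the hypothesis and to verify the arcs-and-gaps count modulo $n$, since that is where an off-by-one or a wrap-around miscount would most easily slip in.
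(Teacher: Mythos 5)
Your proof is correct and complete; note, though, that the paper itself offers no proof of this proposition at all --- it is stated as a known result quoted from \cite{CB}, so there is no in-paper argument to compare against. Your argument is a clean, self-contained derivation: the reduction to a single vector $\mathbf{w}=\mathbf{u}-\mathbf{v}$ is precisely the content of the paper's Proposition \ref{prop2}, the identification of the nonzero pair-positions with $S\cup(S-1)$ is right, and the decomposition of the cyclic support into $t\ge 1$ maximal runs, each contributing its own indices plus exactly one preceding gap position, gives the exact identity $w_p(\mathbf{w})=w_H(\mathbf{w})+t$. This identity is in fact stronger than the stated two-sided bound (it also exhibits the equality cases: $t=1$, i.e.\ a single cyclic run, for the lower bound, and all runs of length one for the upper bound), and it is essentially the same structural fact that underlies the original proof of Cassuto and Blaum. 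Your disjointness check --- distinct arcs have distinct starting points, hence distinct preceding gap positions, and those gap positions lie outside $S$ --- is the step most prone to wrap-around error, and you carry it out correctly, including the observation that both halves of the hypothesis $0<d_H(\mathbf{u},\mathbf{v})<n$ are genuinely needed.
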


Meanwhile, the following relationship between the pair-distance and the pair-weight holds.
\begin{proposition}\label{prop2}
  For all ${\bf u},{\bf v}\in \mathbb{F}_{q}^n$, $d_{p}({\bf u},{\bf v})=w_{p}({\bf u}-{\bf v}).$
\end{proposition}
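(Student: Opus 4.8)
The plan is to reduce the claimed identity to a coordinatewise comparison of the two index sets defining $d_{p}$ and $w_{p}$, exactly mirroring the classical fact that the Hamming distance equals the Hamming weight of the difference. First I would set $\mathbf{w}=\mathbf{u}-\mathbf{v}=(w_{0},w_{1},\ldots,w_{n-1})$, so that $w_{i}=u_{i}-v_{i}$ for every $i$, with all subscripts reduced modulo $n$ as in the definitions.

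The key step is the observation that for each index $i$ with $0\le i\le n-1$,
\[
(w_{i},w_{i+1})=(0,0)\iff(u_{i}-v_{i},\,u_{i+1}-v_{i+1})=(0,0)\iff(u_{i},u_{i+1})=(v_{i},v_{i+1}),
\]
where the last equivalence uses only that $\mathbb{F}_{q}$ is a group under addition, so $a-b=0$ precisely when $a=b$. Negating both sides shows that the pair $(w_{i},w_{i+1})$ is nonzero exactly when $(u_{i},u_{i+1})\ne(v_{i},v_{i+1})$. Consequently the two defining index sets coincide,
\[
\{0\le i\le n-1:(w_{i},w_{i+1})\ne(0,0)\}=\{0\le i\le n-1:(u_{i},u_{i+1})\ne(v_{i},v_{i+1})\},
\]
and taking cardinalities gives $w_{p}(\mathbf{u}-\mathbf{v})=d_{p}(\mathbf{u},\mathbf{v})$, as claimed.

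I do not anticipate a genuine obstacle, since the statement is simply the pair-metric analogue of the translation invariance of the Hamming metric, and the whole argument rests on the injectivity of componentwise subtraction in $\mathbb{F}_{q}$. The only point worth handling with care is that the subscripts in both definitions are reduced modulo $n$ in the same fashion, so that the wrap-around pair $(u_{n-1},u_{0})$ is matched against $(w_{n-1},w_{0})$ and produces no discrepancy in the bijection between index sets.
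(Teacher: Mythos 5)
Your proof is correct. The paper states this proposition without proof, treating it as immediate; your argument --- identifying the index set $\{i : (w_i,w_{i+1})\ne(0,0)\}$ for $\mathbf{w}=\mathbf{u}-\mathbf{v}$ with the index set $\{i : (u_i,u_{i+1})\ne(v_i,v_{i+1})\}$ via the equivalence $a-b=0\iff a=b$, including the wrap-around index --- is precisely the standard verification the paper implicitly relies on.
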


A code $\mathcal{C}$ over $\mathbb{F}_{q}$ of length $n$ is a nonempty subset of $\mathbb{F}_{q}^{n}$ and the elements of $\mathcal{C}$ are called codewords. The minimum pair-distance of $\mathcal{C}$ is defined as
$$d_{p}(\mathcal{C})=\min\lbrace{d_{p}(\bf u},{\bf v})\mid{\bf u},{\bf v}\in \mathcal{C},{\bf u}\neq {\bf v}\rbrace,$$
and the size of $\mathcal{C}$ is the number of codewords it contains. In general, a code $\mathcal{C}$ over $\mathbb{F}_{q}$ of length $n$, size $M$ and minimum pair-distance $d$ is called an $(n,M,d)_{q}$ symbol-pair code. Besides, if $\mathcal{C}$ is a subspace of $\mathbb{F}_{q}^{n}$, then $\mathcal{C}$ is called a linear symbol-pair code. When $\mathcal{C}$ is a linear code, the minimum pair-distance of $\mathcal{C}$ is the smallest pair-weight of nonzero codewords of $\mathcal{C}$. And in this paper we consider linear symbol-pair codes over $\mathbb{F}_{q}$.

The minimum pair-distance $d$ is an important parameter in determining the error-correcting capability of $\mathcal{C}$. Thus it is significant to find symbol-pair codes of fixed length $n$ with pair-distance $d$ as large as possible. In \cite{CJKWY}, the authors  proved the following Singleton-type bound.
\begin{theorem}[Singleton bound]
Let $q\geq 2$ and $2\leq d\leq n$. If $\mathcal{C}$ is an $(n,M,d)_{q}$ symbol-pair code, then $M\leq q^{n-d+2}$.
\end{theorem}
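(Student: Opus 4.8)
The plan is to mirror the classical proof of the Singleton bound in the Hamming metric, replacing ordinary puncturing by a projection that is sensitive to the overlapping-pair structure. The key preliminary step is to record a combinatorial formula for the pair-weight. By Proposition \ref{prop2} we have $d_p(\mathbf u,\mathbf v)=w_p(\mathbf u-\mathbf v)$, so it suffices to analyze $w_p(\mathbf w)$ for a nonzero $\mathbf w\in\mathbb F_q^n$. Writing $T=\{i:w_i\neq 0\}$ for the Hamming support of $\mathbf w$, the pair at position $i$ is nonzero exactly when $w_i\neq 0$ or $w_{i+1}\neq 0$, i.e.\ when $i\in T$ or $i+1\in T$. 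Setting $T-1=\{\,i-1 \bmod n : i\in T\,\}$, this says the nonzero pair positions form exactly $T\cup(T-1)$, whence $w_p(\mathbf w)=|T\cup(T-1)|$.

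With this identity, I would introduce the projection $\varphi\colon\mathbb F_q^{n}\to\mathbb F_q^{\,n-d+2}$ onto the coordinate block $\{0,1,\dots,n-d+1\}$ and restrict it to $\mathcal C$. Since $|\mathbb F_q^{\,n-d+2}|=q^{\,n-d+2}$, it is enough to prove that $\varphi|_{\mathcal C}$ is injective; this argument does not use linearity, so it covers arbitrary $(n,M,d)_q$ symbol-pair codes. For injectivity, suppose toward a contradiction that distinct codewords $\mathbf u,\mathbf v$ agree on all coordinates of $\{0,\dots,n-d+1\}$. Then the support $T$ of $\mathbf u-\mathbf v$ lies in the complementary block $\{n-d+2,\dots,n-1\}$ of $d-2$ consecutive positions. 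Because $0\notin T$ and $d\le n$ forces the smallest index of $T$ to be at least $2$, subtracting $1$ produces no cyclic wraparound, so $T-1\subseteq\{n-d+1,\dots,n-2\}$ and hence $T\cup(T-1)\subseteq\{n-d+1,\dots,n-1\}$. This latter set has only $d-1$ elements, so the formula above gives $d_p(\mathbf u,\mathbf v)=|T\cup(T-1)|\le d-1<d$, contradicting $d_p(\mathcal C)=d$. Therefore $\varphi|_{\mathcal C}$ is injective and $M\le q^{\,n-d+2}$.

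The computation is elementary, and the only genuinely delicate point — the reason the exponent weakens from the classical $n-d+1$ to $n-d+2$ — is the shift-by-one overlap encoded in $T\cup(T-1)$: a single surviving Hamming difference can activate two consecutive pairs, so a block of $d-2$ consecutive differing coordinates already accounts for up to $d-1$ nonzero pairs. Consequently one must retain $n-d+2$ coordinates (not the $n-d+1$ of the classical bound) to force two codewords to coincide. The main thing to watch is the cyclic indexing at the seam between the retained and deleted coordinates; choosing the projected coordinates to be an interval ending immediately before the deleted block guarantees $0\notin T$ and removes any wraparound, after which the counting is routine. One could instead feed the inequality $d_p\le 2d_H$ of Proposition \ref{prop1} into the Hamming-metric Singleton bound, but that yields only the weaker estimate $M\le q^{\,n-\lceil d/2\rceil+1}$, which is why the direct projection is preferable.
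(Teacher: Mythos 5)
Your proof is correct. The paper itself does not prove this bound but quotes it from \cite{CJKWY}; your argument --- project onto $n-d+2$ coordinates and note that a difference supported on the remaining $d-2$ consecutive positions can activate at most $d-1$ pairs, via the identity $w_p(\mathbf{w})=|T\cup(T-1)|$ --- is precisely the standard puncturing proof given there, so the two approaches coincide.
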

A symbol-pair code achieving the Singleton bound is a maximum distance separable (MDS) symbol-pair code. An MDS $(n,M,d)_{q}$ symbol-pair code is simply called an MDS $(n,d)_{q}$ symbol-pair code.
%
%
%
In \cite{KZL}, the authors presented the following theorem.
\begin{theorem}\label{thmcite}
Let $\mathcal{C}$ be an  $[n,n-d_{H},d_{H}]$ linear code over $\mathbb{F}_{q}$. If the pair-distance $d\geq d_{H}+2$, then $\mathcal{C}$ is an MDS $(n,d_{H}+2)_{q}$ symbol-pair code.
\end{theorem}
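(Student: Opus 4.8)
The plan is to exploit that, once the dimension of $\mathcal{C}$ is fixed, the Singleton bound for symbol-pair codes already caps the minimum pair-distance from above, and that this cap coincides exactly with the lower bound handed to us by the hypothesis. First I would record the two relevant invariants of $\mathcal{C}$ regarded as a symbol-pair code: since $\dim\mathcal{C}=n-d_H$, its size is $M=q^{n-d_H}$, and by hypothesis its genuine minimum pair-distance, which I call $d$, satisfies $d\ge d_H+2$.

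Next I would apply the Singleton bound to $\mathcal{C}$ using its \emph{actual} minimum pair-distance $d$ (rather than the target value $d_H+2$). This gives $M\le q^{\,n-d+2}$; substituting $M=q^{n-d_H}$ and comparing exponents yields $n-d_H\le n-d+2$, that is, $d\le d_H+2$. Combining this with the hypothesis $d\ge d_H+2$ squeezes $d$ to the single value $d=d_H+2$.

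Finally, at $d=d_H+2$ the size obeys $M=q^{n-d_H}=q^{\,n-(d_H+2)+2}=q^{\,n-d+2}$, so $\mathcal{C}$ meets the Singleton bound with equality at pair-distance $d_H+2$; by definition this is precisely the assertion that $\mathcal{C}$ is an MDS $(n,d_H+2)_q$ symbol-pair code.

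I do not anticipate a genuine obstacle: the whole proof is the two-sided estimate $d_H+2\le d\le d_H+2$, with the upper bound coming from the dimension through the Singleton bound and the lower bound being the hypothesis. The only step demanding any care is to invoke the Singleton bound with the code's true pair-distance, so that the inequality runs in the direction that pins $d$ from above. It is worth noting that Proposition \ref{prop1} already forces $d\ge d_H+1$ for free, so the hypothesis amounts to the single extra increment that lifts the pair-distance to the maximum the dimension permits.
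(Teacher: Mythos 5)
Your proof is correct. Note that the paper itself gives no proof of this statement --- it is quoted from the reference [KZL] --- so there is nothing internal to compare against; your squeeze argument (the Singleton bound applied with the code's true pair-distance $d$ and size $q^{n-d_H}$ forces $d\le d_H+2$, which together with the hypothesis $d\ge d_H+2$ pins $d=d_H+2$ and makes the size meet the bound exactly) is the standard and essentially the only natural proof, and every step is valid: the bound's hypotheses $2\le d\le n$ hold automatically since the pair-distance of a length-$n$ code never exceeds $n$.
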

Now we are ready to give a  sufficient condition for the existence of linear MDS symbol-pair codes in the following theorem.
\begin{theorem}\label{thmAMDS}
There exists a linear MDS $(n,d_{H}+2)_{q}$ symbol-pair code $\mathcal{C}$ if there exists a matrix with $d_{H}$ rows and $n\ge d_{H}+2\ge 4$ columns over $\mathbb{F}_{q}$, denoted by $H=[H_{0},H_{1},\cdots,H_{n-1}]$, where $H_{i}$  $(0\leq i\leq n-1)$ is the $i$-th column of $H$, satisfying:
\begin{itemize}
\item[1.] any $d_{H}-1$ columns of $H$ are linearly independent;
\item[2.] there exist $d_{H}$ linearly dependent columns;
\item[3.] any  $d_{H}$ cyclically consecutive columns are linearly independent, i.e., $H_{i},H_{i+1},\cdots,H_{i+d-1}$ are linearly independent for $0\leq i\leq n-1$, where the subscripts are reduced modulo $n$.
\end{itemize}
\end{theorem}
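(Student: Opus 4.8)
The plan is to construct the code $\mathcal{C}$ directly as the linear code whose parity-check matrix is the given matrix $H$, and then verify that its pair-distance is exactly $d_H + 2$ by invoking the earlier Theorem~\ref{thmcite}. Concretely, let $\mathcal{C} = \{ \mathbf{c} \in \mathbb{F}_q^n : H \mathbf{c}^{\mathsf{T}} = \mathbf{0} \}$. The first step is to pin down the Hamming parameters of $\mathcal{C}$. Recall the standard fact that for a code with parity-check matrix $H$, the minimum Hamming distance equals the smallest number of linearly dependent columns of $H$. Hypothesis~1 (any $d_H - 1$ columns are linearly independent) together with Hypothesis~2 (some $d_H$ columns are linearly dependent) says precisely that this minimum is $d_H$. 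Since $H$ has $d_H$ rows, the code has dimension at least $n - d_H$; and because some $d_H$ columns are dependent, the rank of $H$ is exactly $d_H$, so $\dim \mathcal{C} = n - d_H$. Hence $\mathcal{C}$ is an $[n, n-d_H, d_H]$ linear code over $\mathbb{F}_q$, which is exactly the hypothesis required by Theorem~\ref{thmcite}.

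The second step is to show that the pair-distance satisfies $d_p \geq d_H + 2$, which is where Hypothesis~3 enters. By Proposition~\ref{prop2} and the linearity of $\mathcal{C}$, it suffices to show every nonzero codeword $\mathbf{c}$ has pair-weight at least $d_H + 2$. I would argue this by a case analysis on $\mathbf{c}$ using Proposition~\ref{prop1}. If a nonzero codeword $\mathbf{c}$ has full Hamming weight $w_H(\mathbf{c}) = n$, then it has no zero coordinate, so consecutive pairs $(c_i, c_{i+1})$ are all nonzero and the pair-weight is $n \geq d_H + 2$ by the standing length hypothesis. If instead $0 < w_H(\mathbf{c}) < n$, then Proposition~\ref{prop1} gives $w_p(\mathbf{c}) \geq w_H(\mathbf{c}) + 1 \geq d_H + 1$; the task is to upgrade this by one to $d_H + 2$. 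The worst case is a minimum-weight codeword with $w_H(\mathbf{c}) = d_H$, and I would show that such a codeword cannot attain pair-weight exactly $d_H + 1$. The pair-weight of $\mathbf{c}$ equals $w_H(\mathbf{c})$ plus the number of cyclic "runs" of consecutive zero coordinates (each block of zeros flanked by nonzeros contributes one extra nonzero pair on each boundary). Attaining $w_p = w_H + 1$ forces the support of $\mathbf{c}$ to be, up to the one extra boundary pair, a single block of $d_H$ cyclically consecutive coordinates. But then the $d_H$ columns of $H$ indexed by that block would have to be linearly dependent (they support a codeword), contradicting Hypothesis~3, which says any $d_H$ cyclically consecutive columns are linearly independent.

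The key step to carry out carefully is this last combinatorial argument relating pair-weight to the cyclic run structure of the support: I would make precise that $w_p(\mathbf{c}) = w_H(\mathbf{c}) + N(\mathbf{c})$, where $N(\mathbf{c})$ is the number of maximal cyclic runs of zeros adjacent to nonzero entries, equivalently the number of indices $i$ with $c_i = 0$ but $c_{i-1} \neq 0$. When $w_H(\mathbf{c}) = d_H < n$, we have $N(\mathbf{c}) \geq 1$, giving $w_p \geq d_H + 1$; equality $N(\mathbf{c}) = 1$ means the nonzero coordinates occupy a single cyclic interval of length $d_H$, and the resulting dependency among $d_H$ consecutive columns of $H$ contradicts Hypothesis~3. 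For $w_H(\mathbf{c}) > d_H$ (still less than $n$) the bound $w_p \geq w_H + 1 \geq d_H + 2$ is automatic. This disposes of all cases and yields $d_p(\mathcal{C}) \geq d_H + 2$.

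Finally, with $\mathcal{C}$ established as an $[n, n-d_H, d_H]$ linear code whose pair-distance is at least $d_H + 2$, Theorem~\ref{thmcite} applies verbatim to conclude that $\mathcal{C}$ is an MDS $(n, d_H+2)_q$ symbol-pair code, completing the proof. The main obstacle I anticipate is formalizing the run-counting identity $w_p(\mathbf{c}) = w_H(\mathbf{c}) + N(\mathbf{c})$ cleanly and correctly handling the cyclic boundary (indices reduced modulo $n$), since the all-nonzero case and the single-run case behave differently; once that identity is in place, the contradiction with Hypothesis~3 is immediate and the rest is bookkeeping.
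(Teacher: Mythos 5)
Your proposal is correct and takes essentially the same route as the paper's proof: take $\mathcal{C}$ to be the code with parity-check matrix $H$, read off the $[n,n-d_{H},d_{H}]$ parameters from the hypotheses, use hypothesis~3 to rule out a weight-$d_{H}$ codeword whose support is a single cyclic interval (so its pair-weight is at least $d_{H}+2$), handle heavier codewords via Proposition~\ref{prop1}, and conclude by Theorem~\ref{thmcite}; your run-counting identity $w_{p}(\mathbf{c})=w_{H}(\mathbf{c})+N(\mathbf{c})$ merely makes explicit what the paper leaves implicit, and your separate treatment of full-weight codewords is actually more careful than the paper's. One small repair is needed: your justification of $\dim\mathcal{C}=n-d_{H}$ (``because some $d_{H}$ columns are dependent, the rank of $H$ is exactly $d_{H}$'') is a non sequitur --- a $d_{H}\times n$ matrix of rank $d_{H}-1$ also has every set of $d_{H}$ columns dependent while satisfying hypotheses~1 and~2 (e.g.\ the parity-check matrix of a classical MDS code with a zero row appended) --- so the rank must instead be pinned to $d_{H}$ by hypothesis~3, which exhibits $d_{H}$ linearly independent columns; note that the paper's own proof is equally terse on this point, attributing the dimension to the first two conditions alone.
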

\begin{proof}
 Let $\mathcal{C}$ be the linear code with parity check matrix $H$. The first two conditions indicate that $\mathcal{C}$ is an $[n,n-d_{H},d_{H}]$ linear code with size $q^{n-d_{H}}$. Consider any codeword $c\in \mathcal{C}$ with $d_{H}$ nonzero coordinates. From Propositions \ref{prop1}, \ref{prop2} and the third condition, we can see that the $d_{H}$ nonzero coordinates are not in cyclically consecutive positions, and thus $w_{p}(c)\ge d_{H}+2$. For any other codeword $c'\in \mathcal{C}$, we must have the Hamming weight $w_{H}(c')\ge d_{H}+1$ and $w_{p}(c')\ge d_{H}+2$. Hence the pair-distance $d\ge d_{H}+2$ and $\mathcal{C}$ is an MDS $(n,d_{H}+2)_{q}$ symbol-pair code.
\end{proof}
\section{MDS symbol-pair codes with pair-distance 5}\label{sec3}
 We construct MDS $(n,5)_{q}$ symbol-pair codes in this section. According to Theorem \ref{thmAMDS}, what we need is to construct a matrix $H$ with $3$ rows and $n$ columns over $\mathbb{F}_{q}$ satisfying the following conditions:
\begin{itemize}
\item[1.] any two columns of $H$ are linearly independent;
\item[2.] there exist three linearly dependent columns;
\item[3.] any three cyclically consecutive columns are linearly independent.
\end{itemize}

\begin{lemma} \label{MDS5}
A linear MDS $(n,5)_{q}$ symbol-pair code, where $q$ is a prime power, exists only if the length $n$ ranges from $5$ to $q^2+q+1$.
\end{lemma}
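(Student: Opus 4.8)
The plan is to recast the code as a point set in the projective plane $PG(2,q)$ and read off the upper bound from the fact that $PG(2,q)$ has exactly $q^{2}+q+1$ points; the lower bound will then be a formality. First I would note that a linear MDS $(n,5)_q$ symbol-pair code $\mathcal{C}$ has size $q^{n-5+2}=q^{n-3}$, hence dimension $n-3$, so its dual $\mathcal{C}^{\perp}$ has dimension $3$. Taking a basis of $\mathcal{C}^{\perp}$ yields a parity check matrix $H=[H_{0},\dots,H_{n-1}]$ of rank $3$, that is, a $3\times n$ matrix over $\mathbb{F}_{q}$ whose columns are nonzero vectors in $\mathbb{F}_{q}^{3}$.

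The crux is to show $d_{H}(\mathcal{C})\ge 3$. If $\mathcal{C}$ had a nonzero codeword $c$ of Hamming weight $1$ or $2$, then, since $n\ge 5>2$ guarantees $0<w_{H}(c)<n$, Proposition \ref{prop1} would give $w_{p}(c)\le 2\,w_{H}(c)\le 4<5$, contradicting $d_{p}(\mathcal{C})=5$; hence $w_{H}(c)\ge 3$ for every nonzero $c$. By the standard characterization of minimum distance via a parity check matrix, this is equivalent to saying that any two columns of $H$ are linearly independent, so no column is zero and no two columns are scalar multiples of one another. Thus the $n$ columns determine $n$ \emph{distinct} points of $PG(2,q)$, and since $|PG(2,q)|=q^{2}+q+1$ we conclude $n\le q^{2}+q+1$. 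For the lower bound, the Singleton bound is stated only for $2\le d\le n$, so the mere existence of the code with $d=5$ forces $n\ge 5$ (equivalently, a pair-weight of $5$ cannot be attained in length $n<5$).

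I expect the only conceptual step to be the reduction $d_{p}(\mathcal{C})=5\Rightarrow d_{H}(\mathcal{C})\ge 3$; once this is in hand, the projective-plane counting is immediate and needs no computation. The single point demanding care is verifying the hypothesis $0<w_{H}(c)<n$ of Proposition \ref{prop1} when excluding weight-$1$ and weight-$2$ codewords, which is precisely why the lower bound $n\ge 5$ should be established first.
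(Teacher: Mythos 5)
Your proof is correct and takes essentially the same route as the paper: Proposition \ref{prop1} forces $d_{H}\ge 3$, so the columns of the $3\times n$ parity check matrix are pairwise linearly independent, hence determine $n$ distinct points of $PG(2,q)$, giving $n\le q^{2}+q+1$. The paper's proof is simply a terser version of this argument; your additional care with the hypothesis $0<w_{H}(c)<n$ of Proposition \ref{prop1} and with establishing $n\ge 5$ first fills in details the paper leaves implicit.
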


\begin{proof}
From Proposition \ref{prop1}, we know that a symbol-pair code with the minimum pair-distance $d=5$ must have the minimum  Hamming distance $d_{H}\ge 3$. Thus the parity check matrix of the code must satisfy the first condition above and the conclusion follows.
\end{proof}

In this section we aim to show the existence of MDS $(n,5)_{q}$ symbol-pair codes for every $5\le n\le q^2+q+1$. We first describe how to construct a full matrix $H(q)$ of size $3\times (q^2+q+1)$ and then we mention how to adjust $H(q)$ to get a matrix $H(q;n)$ of size $3\times n$ for any $n$, $5\le n \le q^2+q+1$. Choose the column vectors of $H(q)$ from the following $q^2+q+1$ vectors: $\{(0,0,1)^{\textup{T}}$, $(0,1,c)^{\textup{T}}$ for $c\in \mathbb{F}_{q}$, $(1,a,b)^{\textup{T}}$ for $a,b\in\mathbb{F}_{q}\}$. In this way  the first two conditions above are guaranteed, and we only need to order these vectors in a proper way to meet the third condition.

First we deal with the case when $q$ is odd. Denote the elements in $\mathbb{F}_{q}$ in an arbitrary order $\{x_0,x_1,\dots,x_{q-1}\}$. As a preparatory step, we partition the $q^2$ vectors of the form $\{(1,a,b)^{\textup{T}},a,b\in\mathbb{F}_{q}\}$ into $q$ disjoint blocks $B_i=\{(1,a,a^2+x_i)^{\textup{T}},a\in\mathbb{F}_{q}\}$ for $0\le i <q$. We give an order of the vectors within $B_i$. Set the first vector to be $(1,x_i,x_i^2+x_i)^{\textup{T}}$, the next to be $(1,x_{i+1},x_{i+1}^2+x_i)^{\textup{T}}$, and then the next to be $(1,x_{i+2},x_{i+2}^2+x_i)^{\textup{T}}\dots$ until finally the vector $(1,x_{i+q-1},x_{i+q-1}^2+x_i)^{\textup{T}}$, where subscripts are reduced modulo $q$. That is,
\begin{equation*}
B_i=
\left[
\begin{array}{ccccc}
1 & 1 & 1 & \cdots & 1\\
x_i & x_{i+1} & x_{i+2} & \cdots & x_{i+q-1} \\
x_i^2+x_i & x_{i+1}^2+x_i & x_{i+2}^2+x_i & \cdots & x_{i+q-1}^2+x_i
\end{array}
\right].
\end{equation*}

Then we construct the matrix $H(q)$ as follows. List all the blocks $B_i$ defined above in the reverse order of their subscripts: $B_{q-1}$, $B_{q-2},\dots$, $B_1$, $B_0$. Between any pair of consecutive blocks $B_{i+1}$ and $B_{i}$, insert a vector $(0,1,2x_{i})^{\textup{T}}$. Note that the pair of $B_0$ and $B_{q-1}$ is also considered, and the vector $(0,1,2x_{q-1})^{\textup{T}}$ should be inserted between them, which is further restricted to be the first column of $H(q)$. Finally the vector $(0,0,1)^{\textup{T}}$ could be placed anywhere and we just set it as the last column. That is,
\begin{equation*}
H(q)=
\left[
\begin{array}{ccccccccccccccc}
0 &          & 0 &         & 0 &         & \dots &         &0  &    & \cdots &     & 0 &     &     0\\
1 & B_{q-1} & 1 & B_{q-2} & 1 & B_{q-3} & \dots & B_{i+1} &1  &B_i & \cdots & B_1 & 1 & B_0 &      0\\
2x_{q-1} &        & 2x_{q-2} &    & 2x_{q-3} &    & \dots &         &2x_i &    & \cdots &     & 2x_0 &     &  1
\end{array}
\right].
\end{equation*}
\begin{proposition} \label{independent}
Every three cyclically consecutive columns of $H(q)$ are linearly independent over $\mathbb{F}_q$.
\end{proposition}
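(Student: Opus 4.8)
The plan is to recast the claim geometrically: three columns of $H(q)$ are linearly independent iff the $3\times 3$ matrix they form is nonsingular, i.e. iff the three points they represent in the projective plane $PG(2,q)$ are not collinear. First I would fix the cyclic order of the columns explicitly. Writing $S_i=(0,1,2x_i)^{\textup{T}}$ for the inserted separators and $Z=(0,0,1)^{\textup{T}}$, the columns read cyclically as $S_{q-1},B_{q-1},S_{q-2},B_{q-2},\dots,S_0,B_0,Z$ and then wrap back to $S_{q-1}$. I would record the first entry $(1,x_i,x_i^2+x_i)^{\textup{T}}$ and the last entry $(1,x_{i-1},x_{i-1}^2+x_i)^{\textup{T}}$ (using $x_{i+q-1}=x_{i-1}$) of each block $B_i$, so that the first, second, last and second-last columns of every block are named. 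Since each block has $q\ge 3$ columns, consecutive separators sit at cyclic distance $q+1\ge 4$; hence no three consecutive columns can contain two separators, and the only triples meeting $Z$ are the three straddling the seam $\dots,B_0,Z,S_{q-1},\dots$. This partitions all triples of consecutive columns into a short, finite list of types, and it suffices to compute one determinant per type.

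For a triple lying entirely inside one block $B_i$, the three columns are $(1,\alpha,\alpha^2+x_i)^{\textup{T}}$ for three distinct parameters $\alpha$; subtracting $x_i$ times the top row from the bottom row turns the determinant into the Vandermonde determinant in those parameters, which is nonzero. Geometrically, this says that no three distinct points of the parabola $b=a^2+x_i$ are collinear, and it disposes of every interior triple at once.

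The crux is the triples straddling a separator $S_i$, and this is exactly where the third coordinate $2x_i$ earns its keep. Three triples meet $S_i$: $(\text{last of }B_{i+1},\,S_i,\,\text{first of }B_i)$, $(\text{second-last of }B_{i+1},\,\text{last of }B_{i+1},\,S_i)$, and $(S_i,\,\text{first of }B_i,\,\text{second of }B_i)$. For a separator $(0,1,c)^{\textup{T}}$ placed against two points $(1,\alpha,\alpha^2+k)^{\textup{T}},(1,\beta,\beta^2+k)^{\textup{T}}$ of a common parabola, clearing the constant $k$ and expanding shows the determinant factors as $(\beta-\alpha)(c-\alpha-\beta)$, so collinearity occurs precisely when $c$ equals the secant slope $\alpha+\beta$. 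At the two ends of $S_i$ the relevant parameter sums are $x_i+x_{i-1}$ and $x_i+x_{i+1}$, and the choice $c=2x_i$ makes $c-\alpha-\beta$ equal to the nonzero quantities $x_i-x_{i-1}$ and $x_i-x_{i+1}$; the two determinants come out as $(x_i-x_{i-1})^2$ and $-(x_i-x_{i+1})^2$, both nonzero. The remaining straddling triple mixes the parabolas $k=x_{i+1}$ and $k=x_i$, but its two finite points share the abscissa $x_i$, so the line through them is vertical (it passes through $Z$) and cannot contain $S_i$; its determinant collapses to $x_i-x_{i+1}\neq 0$ regardless of $c$. The main obstacle is precisely this verification that the single value $c=2x_i$ simultaneously dodges both adjacent secant slopes $x_i+x_{i-1}$ and $x_i+x_{i+1}$; I would also note that $q$ odd is what guarantees the separators $S_i$ are $q$ distinct points (distinct from one another and from $Z$), so that $H(q)$ is a genuine matrix of distinct columns.

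Finally I would settle the three triples at the seam $\dots,B_0,Z,S_{q-1},\dots$. The triple consisting of two parabola points of $B_0$ together with $Z$ has determinant equal to the difference of their abscissae (a secant of finite slope misses the vertical point $Z$), which is nonzero; and the two triples containing both $Z$ and $S_{q-1}$ each consist of one finite point together with the two points at infinity $Z$ and $S_{q-1}$, which span the line at infinity, so their determinants evaluate to $\pm 1$. With a nonzero determinant established for every type in the finite list, I conclude that every three cyclically consecutive columns of $H(q)$ are linearly independent.
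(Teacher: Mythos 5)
Your proof is correct and takes essentially the same route as the paper's: the same case analysis (triples inside a block, triples meeting a separator $(0,1,2x_i)^{\textup{T}}$ with three sub-positions, triples meeting $(0,0,1)^{\textup{T}}$) with the same determinant values — the Vandermonde determinant, $x_i-x_{i+1}$, $(x_i-x_{i-1})^2$ and $-(x_i-x_{i+1})^2$. Your unified secant-slope factorization $(\beta-\alpha)(c-\alpha-\beta)$ (which the paper itself derives when treating even $q$) and your explicit check of the seam triples containing $(0,0,1)^{\textup{T}}$, which the paper dismisses as easy, are presentational refinements rather than a different argument.
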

\begin{proof}
For three consecutive columns within a block $B_i$, $0\leq i\leq q-1$, we have
\begin{equation*}\left|\begin{array}{ccc}1 & 1 & 1\\x_{a-1} & x_a & x_{a+1}\\ x_{a-1}^2+x_i & x_a^2+x_i & x_{a+1}^2+x_i \end{array}\right| = \left|\begin{array}{ccc}1 & 1 & 1\\x_{a-1} & x_a & x_{a+1}\\ x_{a-1}^2 & x_a^2 & x_{a+1}^2 \end{array}\right| =(x_{a-1}-x_a)(x_{a}-x_{a+1})(x_{a+1}-x_{a-1})\ne 0.\end{equation*}

For three consecutive columns with a vector $(0,1,2x_j)^{\textup{T}}$ in the middle, we have
\begin{equation*}\left|\begin{array}{ccc}1 & 0 & 1\\x_j & 1 & x_j\\ x_j^2+x_{j+1} & 2x_j & x_j^2+x_{j} \end{array}\right| = \left|\begin{array}{ccc}1 & 0 & 0\\x_j & 1 & 0\\ x_j^2+x_{j+1} & 2x_j & x_{j}-x_{j+1} \end{array}\right| = x_{j}-x_{j+1} \ne0.\end{equation*}

For three consecutive columns containing a vector $(0,1,2x_j)^{\textup{T}}$, which is not in the middle, we have either
\begin{equation}\label{check1}\left|\begin{array}{ccc}0 & 1 & 1\\1 & x_j & x_{j+1}\\ 2x_j & x_j^2+x_{j} & x_{j+1}^2+x_{j} \end{array}\right| = -(x_j-x_{j+1})^2 \ne0, \end{equation}
or
\begin{equation}\label{check2}\left|\begin{array}{ccc}1 & 1 & 0\\x_{j-1} & x_j & 1\\ x_{j-1}^2+x_{j+1} & x_j^2+x_{j+1} & 2x_j \end{array}\right| = (x_j-x_{j-1})^2 \ne0.\end{equation}

Finally, it is easy to see that every three consecutive columns in $H(q)$ containing the vector $(0,0,1)^{\textup{T}}$ are linearly independent over $\mathbb{F}_q$.
\end{proof}

We now focus on the case when $q$ is even and $q\neq 2,4$. The general outline is similar. Let $\omega$ be a primitive element in $\mathbb{F}_q$. Denote the elements in $\mathbb{F}_{q}$ in an arbitrary order $\{x_0,x_1,\dots,x_{q-1}\}$, with the only constraint that the first several elements are preset to be $x_0=0$, $x_1=1$, $x_2=\omega$, $x_3=\omega^2$, $x_4=\omega+1$, $x_5=\omega^2+\omega$. First define the blocks $B_i$ in the same way as above and list all the blocks $B_i$ in the reverse order of their subscripts: $B_{q-1},B_{q-2},\dots,B_1,B_0$. Now we need to find out which vector of the form $(0,1,y)^{\textup{T}}$ can be inserted between the blocks $B_{j+1}$ and $B_{j}$. Recall the proof of Proposition \ref{independent}. It can be checked that the choice of the value $y$ only affects equations (\ref{check1}) and (\ref{check2}). So for the validity of that proof we only require that
\begin{equation*}\left|\begin{array}{ccc}0 & 1 & 1\\1 & x_j & x_{j+1}\\ y & x_j^2+x_{j} & x_{j+1}^2+x_{j} \end{array}\right| = (x_{j+1}-x_j)(y-x_j-x_{j+1}) \ne0,\end{equation*}
and
\begin{equation*}\left|\begin{array}{ccc}1 & 1 & 0\\x_{j-1} & x_j & 1\\ x_{j-1}^2+x_{j+1} & x_j^2+x_{j+1} & y \end{array}\right| = (x_j-x_{j-1})(y-x_j-x_{j-1}) \ne0.\end{equation*}

That is, $y$ could be any value except for $x_j+x_{j-1}$ and $x_j+x_{j+1}$. An explicit insertion scheme seems hard to be expressed in an easy form, however, we can show that a proper insertion scheme surely exists. Construct a bipartite graph. The first part of the vertices corresponds to $\mathbb{F}_{q}$. The second part of the vertices is the set $\{L_j:0\le j < q\}$, where the symbol $L_j$ indicates the location between the blocks $B_{j+1}$ and $B_j$. $y\in\mathbb{F}_{q}$ is connected to $L_j$ if and only if the vector $(0,1,y)^{\textup{T}}$ could be inserted in the location $L_j$, i.e. $y\neq x_j+x_{j-1}$ and $y\neq x_j+x_{j+1}$. A perfect matching in this bipartite graph corresponds to a proper insertion scheme.

Following the analysis above, we can find that the degree of every vertex in the second part is exactly $q-2$. Recall that we have preset $x_0=0$, $x_1=1$, $x_2=\omega$, $x_3=\omega^2$, $x_4=\omega+1$, $x_5=\omega^2+\omega$. Thus we have:

$\bullet$ $L_1$ is connected to every $y\in\mathbb{F}_{q}$ except for $1$ and $\omega+1$;

$\bullet$ $L_2$ is connected to every $y\in\mathbb{F}_{q}$ except for $\omega+1$ and $\omega^2+\omega$;

$\bullet$ $L_3$ is connected to every $y\in\mathbb{F}_{q}$ except for $\omega^2+\omega$ and $\omega^2+\omega+1$; and

$\bullet$ $L_4$ is connected to every $y\in\mathbb{F}_{q}$ except for $\omega^2+\omega+1$ and $\omega^2+1$.

So, even only among these four vertices, we can deduce that every $y\in\mathbb{F}_{q}$ is connected to at least two of them. So we have

$\bullet$ the neighbourhood of every no more than $q-2$ vertices from the second part is of size at least $q-2$;

$\bullet$ the neighbourhood of every $q-1$ or $q$ vertices from the second part is of size $q$.

Therefore the famous Hall's theorem \cite{Hall} guarantees a perfect matching in this bipartite graph, which corresponds to a proper insertion scheme.

However, the case $q=4$ is listed as a separated case since the framework above using Hall's theorem  would fail. To follow a similar framework, the order within a block needs some slight modifications and then a proper insertion scheme comes along. We shall just list the desired $3\times21$ matrix $H(4)$ instead of tedious explanations.

{\scriptsize
\begin{equation*}
H(4)=
\left[
\begin{array}{ccccccccccccccccccccc}
0 & 1 & 1 & 1 & 1 & 0 & 1 & 1 & 1 & 1 & 0 & 1 & 1 & 1 & 1 & 0 & 1 & 1 & 1 & 1 & 0 \\
1 & 0 & 1 & \omega & \omega+1 & 1 & \omega+1 & \omega & 1 & 0 & 1 & 0 & \omega+1 & \omega & 1 & 1 & 1 & \omega & \omega+1 & 0 & 0 \\
0 & 0 & 1 & \omega+1 & \omega & \omega+1 & \omega+1 & \omega & 0 & 1 & \omega & \omega & 0 & 1 & \omega+1 & 1 & \omega & 0 & 1 & \omega+1 & 1
\end{array}
\right].
\end{equation*}
}

Up till now we have constructed the matrix $H(q)$ of size $3\times (q^2+q+1)$ for every prime power $q\ge3$. Next we discuss how to adjust $H(q)$ to get a $3\times n$ matrix $H(q;n)$ for every $n$, $5\le n\le q^2+q+1$. Denote $n=\alpha(q+1)+\beta$, where $0\le \beta \le q$. There are certainly lots of methods to get such a desired matrix and we offer one as follows.

$\bullet$ If $\beta\neq 2$, select the first $n-1$ columns of $H(q)$, then add the vector $(0,0,1)^{\textup{T}}$.

$\bullet$ If $\beta=2$, select the first $n-1$ columns of $H(q)$, then insert the vector $(0,0,1)^{\textup{T}}$ as the new third column.

The case $\beta=2$ is separated since if we still abide by the first rule then we will come across a triple of the form $\{(0,1,x)^{\textup{T}},(0,0,1)^{\textup{T}},(0,1,y)^{\textup{T}}\}$ which is certainly not independent.

The validity of the construction of the $3\times n$ matrix can be easily inferred from Proposition \ref{independent} plus some further simple checks on those triples containing the vector $(0,0,1)^{\textup{T}}$, and the two triples of the form $\{(0,1,a)^{\textup{T}},(0,1,b)^{\textup{T}},(1,c,d)^{\textup{T}}\}$ (in the $\beta=2$ case).

As illustrative examples, for $q=5$ we list the following matrices: the full matrix $H(5)$ of size $3\times 31$, the adjusted matrix $H(5;13)$ (corresponding to $\beta\neq 2$) and $H(5;14)$ (corresponding to $\beta=2$).

{\scriptsize
\begin{equation*}
H(5)=
\left[
\begin{array}{ccccccccccccccccccccccccccccccc}
0 & 1 & 1 & 1 & 1 & 1 & 0 & 1 & 1 & 1 & 1 & 1 & 0 & 1 & 1 & 1 & 1 & 1 & 0 & 1 & 1 & 1 & 1 & 1 & 0 & 1 & 1 & 1 & 1 & 1  & 0 \\
1 & 4 & 0 & 1 & 2 & 3 & 1 & 3 & 4 & 0 & 1 & 2 & 1 & 2 & 3 & 4 & 0 & 1 & 1 & 1 & 2 & 3 & 4 & 0 & 1 & 0 & 1 & 2 & 3 & 4  & 0 \\
3 & 0 & 4 & 0 & 3 & 3 & 1 & 2 & 4 & 3 & 4 & 2 & 4 & 1 & 1 & 3 & 2 & 3 & 2 & 2 & 0 & 0 & 2 & 1 & 0 & 0 & 1 & 4 & 4 & 1  & 1
\end{array}
\right],
\end{equation*}}
{\scriptsize
\begin{equation*}
H(5;13)=
\left[
\begin{array}{ccccccccccccc}
0 & 1 & 1 & 1 & 1 & 1 & 0 & 1 & 1 & 1 & 1 & 1 & 0 \\
1 & 4 & 0 & 1 & 2 & 3 & 1 & 3 & 4 & 0 & 1 & 2 & 0 \\
3 & 0 & 4 & 0 & 3 & 3 & 1 & 2 & 4 & 3 & 4 & 2 & 1
\end{array}
\right],
H(5;14)=
\left[
\begin{array}{cccccccccccccc}
0 & 1 & 0 & 1 & 1 & 1 & 1 & 0 & 1 & 1 & 1 & 1 & 1 & 0     \\
1 & 4 & 0 & 0 & 1 & 2 & 3 & 1 & 3 & 4 & 0 & 1 & 2 & 1   \\
3 & 0 & 1 & 4 & 0 & 3 & 3 & 1 & 2 & 4 & 3 & 4 & 2 & 4
\end{array}
\right].
\end{equation*}}

Finally, for the case $q=2$, we list the matrices $H(2)$, $H(2;5)$, $H(2;6)$ as follows.

\begin{equation*}
H(2)=
\left[
\begin{array}{ccccccc}
1 & 0 & 0 & 1 & 0 & 1 & 1 \\
0 & 1 & 0 & 1 & 1 & 1 & 0 \\
0 & 0 & 1 & 0 & 1 & 1 & 1
\end{array}
\right],
H(2;5)=
\left[
\begin{array}{ccccc}
1 & 0 & 0 & 1 & 1  \\
0 & 1 & 0 & 1 & 0  \\
0 & 0 & 1 & 1 & 1
\end{array}
\right],
H(2;6)=
\left[
\begin{array}{cccccc}
1 & 0 & 0 & 1 & 0 & 1  \\
0 & 1 & 0 & 1 & 1 & 0  \\
0 & 0 & 1 & 1 & 1 & 1
\end{array}
\right].
\end{equation*}

So far we have finished the construction of MDS $(n,5)_q$ symbol-pair codes for any prime power $q\ge2$ and $5\le n \le q^2+q+1$. The construction, together with Lemma \ref{MDS5}, leads to the following theorem.

\begin{theorem}
There exists a linear MDS $(n,5)_q$ symbol-pair code, where $q$ is a prime power, if and only if the length $n$ ranges from $5$ to $q^2+q+1$.
\end{theorem}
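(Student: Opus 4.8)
The plan is to prove the theorem as a combination of the two directions already developed in the excerpt. The ``only if'' direction is exactly Lemma \ref{MDS5}: any MDS $(n,5)_q$ symbol-pair code forces minimum Hamming distance $d_H\ge 3$ via Proposition \ref{prop1}, and the standard bound on the number of pairwise linearly independent columns of a $3\times n$ parity check matrix over $\mathbb{F}_q$ (the size of a projective plane $\mathrm{PG}(2,q)$) caps $n$ at $q^2+q+1$; the lower bound $n\ge 5$ is immediate from $n\ge d+2$. So I would simply cite Lemma \ref{MDS5} for this half.

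For the ``if'' direction, the strategy is to invoke Theorem \ref{thmAMDS} with $d_H=3$, which reduces the existence of a linear MDS $(n,5)_q$ code to producing a $3\times n$ matrix $H$ whose columns satisfy the three listed conditions: pairwise independence, the existence of three dependent columns, and independence of every three cyclically consecutive columns. First I would assemble the full matrix $H(q)$ of size $3\times(q^2+q+1)$ constructed above, verifying condition~1 automatically (the chosen column set is a system of representatives of the points of $\mathrm{PG}(2,q)$, so no two are proportional) and condition~2 trivially (with $q^2+q+1\ge 5$ columns there are certainly three dependent ones). The heart of the argument is condition~3, the cyclic-consecutive independence, which is precisely Proposition \ref{independent} for odd $q$ and for even $q\neq 2,4$ is secured by the Hall's-theorem matching argument; the exceptional cases $q=2,4$ are handled by the explicit matrices $H(4)$ and $H(2)$.

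Next I would perform the length reduction: given $5\le n\le q^2+q+1$, write $n=\alpha(q+1)+\beta$ with $0\le\beta\le q$ and extract $H(q;n)$ from $H(q)$ by taking the first $n-1$ columns and appending (or, when $\beta=2$, inserting as the third column) the vector $(0,0,1)^{\mathrm{T}}$. I would then confirm that all three conditions persist for $H(q;n)$: conditions~1 and~2 are inherited, while condition~3 follows from Proposition \ref{independent} together with the handful of extra triple-checks flagged in the text—those involving the appended $(0,0,1)^{\mathrm{T}}$ and, in the $\beta=2$ case, the two triples of shape $\{(0,1,a)^{\mathrm{T}},(0,1,b)^{\mathrm{T}},(1,c,d)^{\mathrm{T}}\}$. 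Applying Theorem \ref{thmAMDS} to $H(q;n)$ yields the desired code for each admissible $n$.

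The main obstacle is condition~3 in the even-characteristic case, where no clean closed-form ordering is available and one must instead argue existence of a valid insertion scheme abstractly. Here the delicate point is verifying Hall's condition: one shows each location vertex $L_j$ has degree exactly $q-2$, and then uses the preset values $x_0,\dots,x_5$ built from a primitive element $\omega$ to guarantee that the four vertices $L_1,L_2,L_3,L_4$ already cover every $y\in\mathbb{F}_q$ at least twice, forcing any subset of size $\le q-2$ to have neighbourhood of size $\ge q-2$ and any subset of size $q-1$ or $q$ to have full neighbourhood. This combinatorial bookkeeping—rather than any single computation—is the crux, and it is exactly why $q=2,4$ must be quarantined and dispatched by the explicit small matrices.
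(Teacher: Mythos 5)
Your proposal is correct and follows essentially the same route as the paper: the ``only if'' half via Lemma \ref{MDS5}, and the ``if'' half via Theorem \ref{thmAMDS} applied to the matrix $H(q)$ (Proposition \ref{independent} for odd $q$, the Hall's-theorem insertion argument for even $q\neq 2,4$, explicit matrices for $q=2,4$), followed by the truncation to $H(q;n)$ with the special handling of $\beta=2$. Nothing essential is missing or different.
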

\section{MDS symbol-pair codes from projective geometry}\label{sec4}
 Let $V(r+1,q)$ be a vector space of rank $r+1$ over $\mathbb{F}_{q}$. The projective space $PG(r,q)$ is the geometry whose points, lines, planes, $\cdots$, hyperplanes are the subspaces of $V(r+1,q)$ of rank $1,2,3,\cdots,r$, respectively. The dimension of a subspace of $PG(r,q)$ is one less than the rank of a subspace of $V(r+1,q)$.

Label each point of $PG(r,q)$ as  $\langle(a_{0},a_{1},\cdots,a_{r})\rangle$, the subspace spanned by a nonzero vector $(a_{0},a_{1},\cdots,a_{r})$, where $a_{i}\in \mathbb{F}_{q}$ for $0\leq i\leq r$. Since these coordinates are defined only up to multiplication by a nonzero scalar $\lambda\in \mathbb{F}_{q}$ (here $\langle(\lambda a_{0},\lambda a_{1},\cdots,\lambda a_{r})\rangle=\langle(a_{0},a_{1},\cdots,a_{r})\rangle$), we refer to $a_{0},a_{1},\cdots,a_{r}$ as homogeneous coordinates.  Thus, there are a total of $(q^{r+1}-1)/(q-1)$ points in $PG(r,q)$. For an integer $r\geq 2$, if we choose $n\geq r+3$ points in $PG(r,q)$ and regard them as column vectors of a matrix $H$, then from Theorem \ref{thmAMDS} we have the following theorem.

\begin{theorem}\label{themcon}
There exists a linear MDS $(n,r+3)_{q}$ symbol-pair code if there exists a set $\mathcal{S}$ of $n\geq r+3\ge 5$ points of $PG(r,q)$ satisfying the following conditions:
\begin{itemize}
\item[1.] any $r$ points from $\mathcal{S}$ generate a hyperplane in $PG(r,q)$;
\item[2.] there exist $r+1$ points in $\mathcal{S}$ lying on a hyperplane;
\item[3.] if the $n$ points are ordered, say $\mathcal{P}_{0},\mathcal{P}_{1},\cdots,\mathcal{P}_{n-1}$, then any $r+1$ cyclically consecutive points do not lie on a hyperplane, i.e., $\mathcal{P}_{i},\mathcal{P}_{i+1},\cdots,\mathcal{P}_{i+r}$, where the subscripts are reduced modulo $n$, do not lie on a hyperplane for $0\leq i\leq n-1$.
\end{itemize}
\end{theorem}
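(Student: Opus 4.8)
The plan is to reduce this statement directly to Theorem \ref{thmAMDS} by setting $d_H=r+1$ and reading off a parity-check matrix from the point set $\mathcal{S}$. First I would choose, for each point $\mathcal{P}_i$ of $\mathcal{S}$, a representative nonzero vector $H_i\in V(r+1,q)=\mathbb{F}_q^{r+1}$, and assemble the $(r+1)\times n$ matrix $H=[H_0,H_1,\cdots,H_{n-1}]$. This matrix has exactly $d_H=r+1$ rows and $n\ge r+3=d_H+2\ge 5>4$ columns, so the dimension hypotheses of Theorem \ref{thmAMDS} are met. A preliminary remark is that the particular choice of representatives is immaterial: rescaling any $H_i$ by a nonzero scalar $\lambda\in\mathbb{F}_q$ leaves every linear (in)dependence relation among the columns unchanged, so the conditions we are about to check are well-defined properties of the projective points.

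The core of the argument is the standard dictionary between incidence in $PG(r,q)$ and rank in $V(r+1,q)$: a collection of at most $r+1$ points lies on a common hyperplane \Iff their representative vectors span a subspace of rank at most $r$, equivalently \Iff those vectors are linearly dependent; dually, $r+1$ points fail to lie on any hyperplane \Iff the corresponding $r+1$ vectors span the full space $V(r+1,q)$ and are therefore linearly independent. I expect this translation, rather than any deep computation, to be the main point requiring care, since one must match the projective notions of ``generating'' versus ``lying on'' a hyperplane to the correct rank statements.

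Using this dictionary I would translate the three hypotheses one by one. Condition~1, that any $r$ points of $\mathcal{S}$ generate a hyperplane, means that any $r=d_H-1$ of the columns $H_i$ span a rank-$r$ subspace and hence are linearly independent, which is exactly Condition~1 of Theorem \ref{thmAMDS}. Condition~2, that some $r+1$ points lie on a hyperplane, means the corresponding $d_H=r+1$ columns are linearly dependent, giving Condition~2 of Theorem \ref{thmAMDS}. Condition~3, that any $r+1$ cyclically consecutive points do not lie on a hyperplane, means any $d_H=r+1$ cyclically consecutive columns span $V(r+1,q)$ and are linearly independent, matching Condition~3.

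Having verified all three conditions of Theorem \ref{thmAMDS} for the matrix $H$ with $d_H=r+1$, I would then invoke that theorem directly: it produces a linear MDS $(n,d_H+2)_q=(n,r+3)_q$ symbol-pair code whose parity-check matrix is $H$, which is precisely the asserted conclusion. This completes the proof.
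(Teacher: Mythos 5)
Your proposal is correct and follows exactly the route the paper takes: the paper states this theorem as an immediate consequence of Theorem \ref{thmAMDS} by regarding the $n$ points of $PG(r,q)$ as the columns of a parity-check matrix, which is precisely your reduction with $d_H=r+1$. Your explicit verification of the dictionary between hyperplane incidence and linear (in)dependence, and the remark that the choice of homogeneous representatives is immaterial, simply spell out details the paper leaves implicit.
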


Here we consider the case $r=3$.
\begin{definition}[]
A set $\mathcal{O}$ of points of $PG(3,q)$ is called an ovoid provided it satisfies the following conditions:
\begin{itemize}
\item[1.] each line meets $\mathcal{O}$ in at most two points;
\item[2.] through each point of $\mathcal{O}$ there are $q+1$ lines, each of which meets $\mathcal{O}$ in exactly one point, and all of them lie on a plane.
\end{itemize}
\end{definition}

The following two lemmas can be found in \cite{P09}.
\begin{lemma}
Each ovoid has $q^{2}+1$ points.
\end{lemma}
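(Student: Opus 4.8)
The plan is to prove the cardinality by a double-counting argument centred on a single point of the ovoid. First I would fix an arbitrary point $P\in\mathcal{O}$ and count the lines of $PG(3,q)$ that pass through $P$. A line through $P$ corresponds to a rank-$2$ subspace of $V(4,q)$ containing the rank-$1$ subspace $\langle P\rangle$, equivalently to a point of $PG(2,q)$; hence there are exactly $q^{2}+q+1$ lines through $P$.

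Next I would classify each such line according to how it meets $\mathcal{O}$. By the first ovoid condition every line meets $\mathcal{O}$ in at most two points, so a line through $P$ is either a \emph{tangent} (meeting $\mathcal{O}$ only at $P$) or a \emph{secant} (meeting $\mathcal{O}$ in exactly one further point). The second ovoid condition says precisely that there are $q+1$ tangent lines through $P$. Subtracting, the number of secant lines through $P$ is $(q^{2}+q+1)-(q+1)=q^{2}$.

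The final step is to set up a bijection between the points of $\mathcal{O}\setminus\{P\}$ and the secant lines through $P$. Each point $Q\in\mathcal{O}$ with $Q\neq P$ determines the unique line $PQ$, which contains the two ovoid points $P$ and $Q$ and is therefore a secant; conversely each secant through $P$ contributes exactly one point of $\mathcal{O}$ other than $P$. Since two distinct points determine a unique line, distinct choices of $Q$ yield distinct secants, and condition~$1$ guarantees each secant carries no third ovoid point, so the map is indeed a bijection. Counting then gives $|\mathcal{O}\setminus\{P\}|=q^{2}$, whence $|\mathcal{O}|=q^{2}+1$.

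I expect no serious obstacle here; the only point requiring care is verifying that the correspondence $Q\mapsto PQ$ is a genuine bijection rather than a mere surjection or injection, which amounts to invoking both ovoid axioms simultaneously---the bound of two points per line to prevent over-counting, and the count of $q+1$ tangents to pin down the secant total.
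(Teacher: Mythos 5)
Your proof is correct and complete. There is, however, nothing in the paper to compare it against: the paper states this lemma without proof, citing Payne's course notes \cite{P09}, so your argument in effect supplies the missing standard proof. The double count is the classical one: there are $q^{2}+q+1$ lines through a fixed $P\in\mathcal{O}$, of which exactly $q+1$ are tangents by the second ovoid axiom, leaving $q^{2}$ secants, each carrying exactly one point of $\mathcal{O}\setminus\{P\}$ by the first axiom. The one interpretive point you flag at the end is genuine: the argument needs the second axiom to mean that the tangent lines through $P$ are \emph{exactly} $q+1$ in number (equivalently, every line meeting $\mathcal{O}$ only in $P$ lies on the tangent plane at $P$); under the weaker reading ``there exist $q+1$ coplanar tangent lines through $P$,'' any sufficiently small cap would satisfy both axioms and the lemma would be false. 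Your reading is the standard one and clearly the one the paper intends, so the proof stands as written.
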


\begin{lemma}\label{lemPG1}
Each plane meets $\mathcal{O}$ either in one point or in $q+1$ points.
\end{lemma}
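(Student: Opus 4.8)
The plan is to exploit the two defining conditions of the ovoid together with the elementary incidence geometry of $PG(2,q)$. First I would observe that for an arbitrary plane $\pi$ the set $\pi\cap\mathcal{O}$ is an arc of $\pi$: every line of $\pi$ is also a line of $PG(3,q)$, so by the first defining condition it meets $\pi\cap\mathcal{O}$ in at most two points. It therefore suffices to compute the possible sizes of this arc, and I would split the argument according to whether $\pi$ is the tangent plane of some point of $\mathcal{O}$ or not.

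For a tangent plane, fix $P\in\mathcal{O}$ and let $\pi_P$ be the plane containing the $q+1$ tangent lines through $P$ supplied by the second defining condition. Since $\pi_P$, being a copy of $PG(2,q)$, has exactly $q+1$ lines through any one of its points, \emph{every} line of $\pi_P$ through $P$ is one of these tangents. Thus a hypothetical second point $Q\in\pi_P\cap\mathcal{O}$ would make $PQ$ a secant line inside $\pi_P$ through $P$, contradicting the tangency of all such lines; hence $\pi_P\cap\mathcal{O}=\{P\}$, a single point.

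The conceptual heart is the secant case, i.e.\ a plane $\pi$ with $|\pi\cap\mathcal{O}|\ge 2$. Fixing $P\in\pi\cap\mathcal{O}$, I would note that $\pi\ne\pi_P$, so the two planes meet in a single line $\ell$ through $P$; a line of $\pi$ through $P$ is tangent exactly when it lies in $\pi_P$, i.e.\ exactly when it equals $\ell$. Consequently precisely one of the $q+1$ lines of $\pi$ through $P$ is tangent and the other $q$ are secant, each meeting $\mathcal{O}$ in one further point of $\pi$; these $q$ points are distinct and, together with $P$, account for every point of $\pi\cap\mathcal{O}$, giving exactly $q+1$. Pinning down that there is exactly one tangent direction through $P$ inside $\pi$ is the step requiring the most care.

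It remains to rule out planes disjoint from $\mathcal{O}$, so that the stated dichotomy has no third (empty) alternative. Here I would argue globally: a plane meeting $\mathcal{O}$ in a single point $P$ must carry $q+1$ tangents through $P$ and so equals $\pi_P$, whence the $1$-point planes are exactly the $q^2+1$ tangent planes. Double counting the incidences between the $q^2+1$ ovoid points and the planes through them (each point lies in $q^2+q+1$ planes) gives $1\cdot(q^2+1)+(q+1)\cdot s=(q^2+1)(q^2+q+1)$, so the number $s$ of $(q+1)$-point planes is $q(q^2+1)$; since $(q^2+1)+q(q^2+1)=q^3+q^2+q+1$ already equals the total number of planes of $PG(3,q)$, no plane can be external. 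This closing count is routine bookkeeping rather than a genuine difficulty, the substance of the proof lying in the secant-plane analysis above.
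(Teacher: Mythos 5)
The paper itself gives no proof of this lemma---it is quoted from \cite{P09}---so your argument can only be measured against the standard proof, whose architecture it shares: tangent planes meet $\mathcal{O}$ in one point, planes containing at least two points of $\mathcal{O}$ meet it in exactly $q+1$ points, and a global count rules out external planes. That outline is correct, your tangent-plane case is proved properly, and the closing double count is arithmetically right.

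There is, however, a genuine gap at exactly the step you flag as ``requiring the most care.'' You invoke, twice, the biconditional ``a line through $P$ is tangent if and only if it lies in $\pi_P$'': once to conclude that $\ell=\pi\cap\pi_P$ is the \emph{only} tangent of $\pi$ through $P$ (the secant case), and once to conclude that a plane meeting $\mathcal{O}$ in the single point $P$ must equal $\pi_P$. You established the ``if'' direction (every line of $\pi_P$ through $P$ is one of the $q+1$ tangents), but the ``only if'' direction does not follow from the definition as stated: condition 2 asserts the \emph{existence} of $q+1$ tangent lines through $P$ lying in a common plane; it does not say that these exhaust the tangent lines through $P$. If some tangent through $P$ lay outside $\pi_P$, then a plane $\pi$ could contain $t\ge 2$ tangents through $P$ and would meet $\mathcal{O}$ in $q+2-t<q+1$ points, and your classification of the one-point planes would fail as well. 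The gap is real but is filled in one sentence using the lemma stated immediately before this one, $|\mathcal{O}|=q^2+1$: by condition 1, the secant lines through $P$ are in bijection with $\mathcal{O}\setminus\{P\}$, so there are exactly $q^2$ of them among the $q^2+q+1$ lines of $PG(3,q)$ through $P$; hence there are exactly $q+1$ tangent lines through $P$, and they must coincide with the $q+1$ lines of condition 2, i.e.\ every tangent through $P$ lies in $\pi_P$. With that counting argument inserted, your proof is complete and correct.
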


 We can easily derive the following lemma.
\begin{lemma}\label{thmPG2}
For an ovoid $\mathcal{O}$ in $PG(3,q)$, there exist $q+1$ planes, each of which contains $q+1$ points in $\mathcal{O}$. Moreover, these planes intersect in a common line in $\mathcal{O}$ and cover all points of $\mathcal{O}$.
\end{lemma}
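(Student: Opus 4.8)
The plan is to fix a secant line of the ovoid and exploit the pencil of planes containing it. First I would pick any two points $P, Q \in \mathcal{O}$ and let $\ell$ be the line joining them. Because each line meets $\mathcal{O}$ in at most two points (the first defining property of an ovoid), we have $\ell \cap \mathcal{O} = \{P, Q\}$ exactly, so $\ell$ is a genuine chord of $\mathcal{O}$.

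Next I would count the planes of $PG(3,q)$ containing $\ell$. A line in $PG(3,q)$ is a rank-$2$ subspace of $V(4,q)$, and the planes (rank-$3$ subspaces) containing it correspond bijectively to the points of the quotient, which is a $PG(1,q)$; hence there are exactly $q+1$ such planes. I would then invoke Lemma \ref{lemPG1}: each plane in this pencil contains the chord $\ell$, hence at least the two points $P, Q$ of $\mathcal{O}$. Since every plane meets $\mathcal{O}$ in either $1$ or $q+1$ points, and $q+1 > 2$ for every prime power $q$, the $1$-point case is excluded, so each of these $q+1$ planes meets $\mathcal{O}$ in exactly $q+1$ points. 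This already yields the $q+1$ planes with $q+1$ points each, sharing the common line $\ell$.

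For the covering claim I would use the elementary fact that the $q+1$ planes through a fixed line partition the rest of $PG(3,q)$: every point off $\ell$ lies in a unique plane of the pencil (the one spanned by that point together with $\ell$), while the points of $\ell$ lie in all of them. Thus the pencil covers every point of $\mathcal{O}$. A consistency check confirms the count: each plane contributes $q+1$ points, of which only $P$ and $Q$ are shared, giving $2 + (q+1)(q-1) = q^2 + 1 = |\mathcal{O}|$ points in total, so the covering is in fact a near-partition with $\ell$ as the common axis.

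The only point requiring care is the phrase ``a common line in $\mathcal{O}$'': because an ovoid contains no three collinear points, this common line cannot lie inside $\mathcal{O}$ in a set-theoretic sense, so I read it as the chord $\ell$ meeting $\mathcal{O}$ precisely in $P$ and $Q$, and state the lemma accordingly. Beyond correctly combining the two defining properties of an ovoid with Lemma \ref{lemPG1} and the standard pencil-of-planes count, there is no substantial obstacle; the remainder is bookkeeping.
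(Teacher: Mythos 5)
Your proposal is correct and takes essentially the same approach as the paper: fix a secant line through two points of the ovoid and apply Lemma~\ref{lemPG1} to the planes containing that line. The only difference is presentational --- the paper selects the $q+1$ planes greedily (repeatedly choosing a point of $\mathcal{O}$ off the planes already constructed), whereas you invoke the standard pencil-of-planes count and its covering property directly, which if anything makes the ``exactly $q+1$ planes'' and ``cover all of $\mathcal{O}$'' claims more explicit.
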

\begin{proof}
Fix two arbitrary points $A,B \in \mathcal{O}$, and then choose a point $P$ from $\mathcal{O}\setminus \lbrace A,B\rbrace$. By Lemma \ref{lemPG1}, the plane formed by $A,B,P$, which we denote by $ABP$, must meet $\mathcal{O}$ in $q+1$ points. Next, choose a point $Q\in\mathcal{O}$ which is not on $ABP$. Then, again, we get
a plane $ABQ$ which also meets $\mathcal{O}$ in $q+1$ points. If we continue in this way, we can get $q+1$ planes, each of which contains $q+1$ points of $\mathcal{O}$. These planes intersect in a common line which meets $\mathcal{O}$ in the points $A,B$.
\end{proof}

We can now state our construction.
\begin{theorem}\label{thm1}
Let $q\geq5$ be an odd prime power. Then there exist linear MDS $(n,6)_{q}$ symbol-pair codes for all $n$, $6\leq n\leq q^{2}+1$.
\end{theorem}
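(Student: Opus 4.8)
The plan is to invoke Theorem \ref{themcon} with $r=3$: it suffices, for each $n$ with $6\le n\le q^2+1$, to select $n$ points of an ovoid $\mathcal{O}\subset PG(3,q)$ and to arrange them cyclically as $\mathcal{P}_0,\dots,\mathcal{P}_{n-1}$ so that no four cyclically consecutive points are coplanar. Conditions~1 and~2 of Theorem \ref{themcon} come essentially for free: no three points of $\mathcal{O}$ are collinear, so any three span a plane; and by Lemma \ref{lemPG1} every secant plane carries $q+1\ge 6$ points of $\mathcal{O}$, so four coplanar points are present as soon as the chosen set contains four points of one plane section, which the constructions below will arrange. Thus the whole content is Condition~3. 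Since $q$ is odd, $\mathcal{O}$ is an elliptic quadric, and I will use its Miquelian model: identify $\mathcal{O}$ with $\mathbb{F}_{q^2}\cup\{\infty\}$ so that the plane sections of $\mathcal{O}$ are exactly the circles, i.e. the images of $\mathbb{F}_q\cup\{\infty\}$ under $PGL_2(\mathbb{F}_{q^2})$. In this language four points are coplanar iff they are concyclic iff their cross-ratio lies in $\mathbb{F}_q\cup\{\infty\}$; a circle through $\infty$ is an $\mathbb{F}_q$-affine line with $\infty$ adjoined, and a circle through both $0$ and $\infty$ is an $\mathbb{F}_q$-line through the origin. So Condition~3 becomes: order the $n$ chosen elements of $\mathbb{F}_{q^2}\cup\{\infty\}$ so that every four cyclically consecutive ones have cross-ratio outside $\mathbb{F}_q\cup\{\infty\}$.

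For the backbone I fix a primitive element $\xi$ of $\mathbb{F}_{q^2}$ and list elements as consecutive powers of $\xi$. The map $\mu:z\mapsto\xi z$ is a collineation stabilizing $\mathcal{O}$ whose only orbits on $\mathcal{O}$ are $\{0\}$, $\{\infty\}$ and the full set $\mathbb{F}_{q^2}^{*}$ of size $q^2-1$. I claim $\xi^{k},\xi^{k+1},\xi^{k+2},\xi^{k+3}$ are never concyclic: applying $\mu^{-k}$ reduces this to $1,\xi,\xi^2,\xi^3$, and if these lay on a circle $\mathcal{C}$ then $\mathcal{C}$ and $\mu(\mathcal{C})$ would share the three points $\xi,\xi^2,\xi^3$; since two distinct plane sections of $\mathcal{O}$ meet in a line and hence in at most two points of $\mathcal{O}$, this forces $\mathcal{C}=\mu(\mathcal{C})$, so $\mathcal{C}$ would be $\mu$-invariant and contain the whole orbit $\mathbb{F}_{q^2}^{*}$, impossible because $q^2-1>q+1$ for $q\ge 5$. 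In particular the cyclic list $\xi^{0},\xi^{1},\dots,\xi^{q^2-2}$ already solves $n=q^2-1$ (its wrap-around windows are again $\mu$-images of $1,\xi,\xi^2,\xi^3$), and $n=q^2+1$ follows by inserting $0$ and $\infty$ at two well-separated positions of this seamless cycle, where every new window consists of one special point and three consecutive powers and is checked directly.

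For $6\le n\le q^2$ I propose the explicit scheme $(0,\infty,\xi^{s},\xi^{s+1},\dots,\xi^{s+n-3})$. Every window lying inside the run is a block of four consecutive powers and is admissible by the backbone; there are only five windows touching $0$ or $\infty$, and a short cross-ratio computation shows that four of them are always admissible (they reduce to statements equivalent to $\xi\notin\mathbb{F}_q$), while the single window $(\xi^{s+n-3},0,\infty,\xi^{s})$ is concyclic precisely when $\xi^{n-3}\in\mathbb{F}_q$, i.e. when $n\equiv 3\pmod{q+1}$. Since $q\ge 5$, this already settles every length with $n\not\equiv 3\pmod{q+1}$, in particular the base case $n=6$.

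The main obstacle is therefore the uniform treatment of the exceptional lengths $n\equiv 3\pmod{q+1}$ (about $q$ values), where the two special points sit across the unavoidable exponent jump of a consecutive block and force a bad circle through $0$ and $\infty$. The plan to remove them is to break this jump by a different placement, e.g. keeping $\infty$ at the jump but moving $0$ into the interior, which replaces the single divisibility condition by two generic conditions of the form $\Lambda\notin\mathbb{F}_q$; each such $\Lambda$ is one Frobenius-type equation in $\xi^{n-3}$ or $\xi^{n-4}$ and hence excludes at most $q$ lengths. More flexibly, one may list the points along an arithmetic progression of exponents with common difference $g$ (so that $\xi^{g}$ again has order $>q+1$ and the backbone argument still applies), yielding a whole family of schemes whose forbidden lengths vary with $g$. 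The crux of the write-up will be a counting argument showing that for every residue class at least one admissible scheme survives, so that all $n$ in $[6,q^2+1]$ are covered; verifying that the $O(q)$ exceptional lengths of one scheme are rescued by another is where the real work lies.
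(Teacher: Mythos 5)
Your reduction to the Miquelian inversive plane---identifying the elliptic quadric with $\mathbb{F}_{q^2}\cup\{\infty\}$, plane sections with circles, and coplanarity with cross-ratio lying in $\mathbb{F}_q$---is sound and genuinely different from the paper's proof, which orders the ovoid greedily along the pencil of secant planes through two fixed points $A,B$ and repairs the wrap-around by case analysis. Your backbone lemma (no four consecutive powers $\xi^k,\xi^{k+1},\xi^{k+2},\xi^{k+3}$ are concyclic) is correctly proved, and your analysis of the scheme $(0,\infty,\xi^{s},\dots,\xi^{s+n-3})$ is right: the unique possibly bad window is $(\xi^{s+n-3},0,\infty,\xi^{s})$, concyclic exactly when $(q+1)\mid(n-3)$. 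However, there is a genuine gap: the entire residue class $n\equiv 3\pmod{q+1}$---about $q$ lengths, the smallest being $n=q+4$---is not proved but deferred to a counting argument over alternative schemes that you explicitly leave unexecuted. Moreover, counting alone ("each condition $\Lambda\notin\mathbb{F}_q$ excludes at most $q$ lengths") cannot finish the job as stated: you would still have to show that the exceptional sets of your various schemes do not all contain some common length in this class, which is exactly the work you acknowledge is missing. As written, the theorem is not established.

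The gap is closable, and in fact by your own first suggestion with no counting at all. For $n\equiv 3\pmod{q+1}$ (so $n\ge q+4\ge 9$), order the points as $\infty,\xi^{s},\dots,\xi^{s+j},0,\xi^{s+j+1},\dots,\xi^{s+n-3}$ with any $2\le j\le n-6$. The four windows containing $0$ are multiplicative translates of $(1,\xi,\xi^2,0)$, $(1,\xi,0,\xi^2)$, $(1,0,\xi,\xi^2)$, $(0,1,\xi,\xi^2)$, with cross-ratios $1+\xi$, $(1+\xi)^{-1}$, $\xi(1+\xi)^{-1}$, $(1+\xi)\xi^{-1}$, each lying in $\mathbb{F}_q$ only if $\xi\in\mathbb{F}_q$; the windows $(\xi^{s+n-5},\xi^{s+n-4},\xi^{s+n-3},\infty)$ and $(\infty,\xi^{s},\xi^{s+1},\xi^{s+2})$ are bad only if $\xi+1\in\mathbb{F}_q$; and, writing $t=\xi^{n-3}\in\mathbb{F}_q^{*}\setminus\{1\}$ (your hypothesis; $t\ne 1$ since $0<n-3<q^2-1$), the two windows straddling $\infty$, namely $(\xi^{s+n-4},\xi^{s+n-3},\infty,\xi^{s})$ and $(\xi^{s+n-3},\infty,\xi^{s},\xi^{s+1})$, are bad only if $(\xi-t)/(t(\xi-1))\in\mathbb{F}_q$ or $(\xi-t)/(1-t)\in\mathbb{F}_q$, and each of these again forces $\xi\in\mathbb{F}_q$, a contradiction. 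So every window is admissible and the exceptional lengths are settled. One further small repair: you never verify condition 2 of Theorem \ref{themcon} (the existence of four coplanar points among the chosen ones), and for small $n$ your point sets need not contain such a quadruple. This is harmless---if no four chosen points are coplanar, the resulting code is classically MDS and hence an MDS symbol-pair code---but it must be said, since otherwise Theorem \ref{themcon} does not apply as stated.
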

\begin{proof} Let $\mathcal{O}$ be an ovoid in $PG(3,q)$ and $\pi_{0},\pi_{1},\cdots,\pi_{q}$ be the planes described in Lemma \ref{thmPG2}. Moreover, let the intersection of $\pi_{0},\pi_{1},\cdots,\pi_{q}$ meets $\mathcal{O}$ in the points $A$ and $B$. For convenience, denote the plane formed by points $P,Q,R$ by $PQR$ and denote the set of the points lying in a set, say $\Omega$, but not on the plane $PQR$ by $\Omega\setminus PQR$. For four ordered points $P,Q,R,S$, we say $S$ is a \emph{proper} point if $S$ does not lie on the plane $PQR$. In other words, we say $S$ is a \emph{proper} point if $S$ does not lie on the plane formed by the three points ordered right ahead of it.

We now consider the three conditions stated in Theorem \ref{themcon}. It is clear that, for the points of $\mathcal{O}$, the first condition is inherently satisfied and the second condition can be easily satisfied. Thus, the points of $\mathcal{O}$ simply need to be ordered such that any four cyclically consecutive points do not lie on a plane. To attain this goal, we discuss it in two parts. First we order $n$ ($6\leq n\leq q^{2}+1$) points of $\mathcal{O}$ as $\mathcal{P}_{0},\cdots,\mathcal{P}_{n-1}$ and make sure that any four consecutive points do not lie on a plane, i.e., $\mathcal{P}_{i},\mathcal{P}_{i+1},\mathcal{P}_{i+2},\mathcal{P}_{i+3}$ do not lie on a plane for $0\leq i\leq n-4$. On this basis, we then adjust the order to make sure that any four cyclically consecutive points do not lie on a plane, i.e., $\mathcal{P}_{i},\mathcal{P}_{i+1},\mathcal{P}_{i+2},\mathcal{P}_{i+3}$ do not lie on a plane for $0\leq i\leq n-1$.

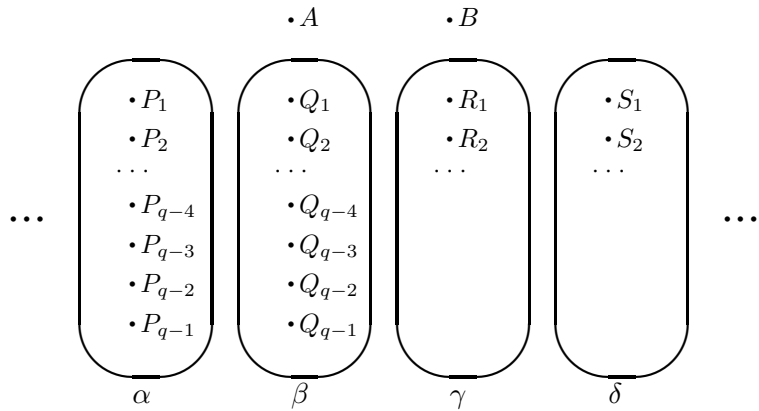
\begin{figure}[h]
\centering
\begin{picture}(200,160)
\thicklines
\multiput(20,115)(0,-15){2}{\circle*{2}}
\multiput(20,30)(0,15){4}{\circle*{2}}
\multiput(15,88)(5,0){3}{\circle*{1.2}}
\multiput(-25,70)(5,0){3}{\circle*{2}}
\multiput(245,70)(5,0){3}{\circle*{2}}
\put(20,0){$\alpha$}\put(80,0){$\beta$}\put(140,0){$\gamma$}\put(200,0){$\delta$}

\multiput(80,115)(0,-15){2}{\circle*{2}}
\multiput(80,30)(0,15){4}{\circle*{2}}
\multiput(75,88)(5,0){3}{\circle*{1.2}}

\put(80,145){\circle*{2}}
\put(140,145){\circle*{2}}
\put(83,143){\small $A$}
\put(143,143){\small $B$}

\multiput(140,115)(0,-15){2}{\circle*{2}}
\multiput(200,115)(0,-15){2}{\circle*{2}}
\multiput(135,88)(5,0){3}{\circle*{1.2}}
\multiput(195,88)(5,0){3}{\circle*{1.2}}

\multiput(25,70)(60,0){4}{\oval(50,120)}

\put(23,112){\small $P_{1}$}
\put(23,97){\small $P_{2}$}
\put(23,72){\small $P_{q-4}$}
\put(23,57){\small $P_{q-3}$}
\put(23,42){\small $P_{q-2}$}
\put(23,27){\small $P_{q-1}$}

\put(83,112){\small $Q_{1}$}
\put(83,97){\small $Q_{2}$}
\put(83,72){\small $Q_{q-4}$}
\put(83,57){\small $Q_{q-3}$}
\put(83,42){\small $Q_{q-2}$}
\put(83,27){\small $Q_{q-1}$}

\put(143,112){\small $R_{1}$}
\put(143,97){\small $R_{2}$}
\put(203,112){\small $S_{1}$}
\put(203,97){\small $S_{2}$}
\end{picture}
\caption {The sets $\pi_{i}\setminus \lbrace A,B\rbrace$ when $q$ is an odd prime power\label{fig1}.}
\end{figure}
First, let $\alpha,\beta,\gamma$ and $\delta$ denote the sets $\pi_{0}\setminus \lbrace A,B\rbrace,\pi_{1}\setminus \lbrace A,B\rbrace,\pi_{2}\setminus \lbrace A,B\rbrace,\pi_{3}\setminus \lbrace A,B\rbrace$ respectively, as illustrated in Figure \ref{fig1}. Let $A,B$ be the first and the second points. Choose an arbitrary point $P_{1}$ from $\alpha$ to be the third and an arbitrary point $Q_{1}$ from $\beta$ to be the fourth. It is obvious that $A,B,P_{1},Q_{1}$ do not lie on a plane. Next, choose $P_{2}\in\alpha\setminus BP_{1}Q_{1}$ to be the fifth and $Q_{2}\in\beta\setminus P_{1}P_{2}Q_{1}$ to be the sixth. Two planes intersect in a line and a line meets $\mathcal{O}$ in at most two points. Thus, we can continue in this way, i.e., take \emph{proper} points from $\alpha$ and $\beta$ in turn, until only one point remains in $\alpha$.

Now suppose this has been done so that the point $P_{q-1}$ remains, i.e., we have ordered the points as $A,B,P_{1},Q_{1},\cdots,P_{q-2},Q_{q-2}$. Then we have that $P_{q-4},Q_{q-4},P_{q-3},Q_{q-3}$ do not lie on a plane, nor do $Q_{q-4},P_{q-3},Q_{q-3},P_{q-2}$, and nor do $P_{q-3},Q_{q-3},P_{q-2},Q_{q-2}$. Next we order the two points $P_{q-1}$ and $Q_{q-1}$. We consider the following three cases:

{\bf Case 1:} $P_{q-1}\notin P_{q-2}Q_{q-3}Q_{q-2}$, $Q_{q-1}\notin P_{q-2}P_{q-1}Q_{q-2}$.

Note that this situation is ideal. Let the order be $P_{q-4},Q_{q-4},P_{q-3},Q_{q-3},P_{q-2},Q_{q-2},P_{q-1},Q_{q-1}$.

{\bf Case 2:} $P_{q-1}\notin P_{q-2}Q_{q-3}Q_{q-2}$, but $Q_{q-1}\in P_{q-2}P_{q-1}Q_{q-2}$.

Change the order to be $P_{q-4},Q_{q-4},P_{q-3},P_{q-2},Q_{q-3},Q_{q-2},P_{q-1},Q_{q-1}$.

{\bf Case 3:} $P_{q-1}\in P_{q-2}Q_{q-3}Q_{q-2}$.

Change the order to be $P_{q-4},Q_{q-4},P_{q-3},Q_{q-3},P_{q-2},Q_{q-2},Q_{q-1},P_{q-1}$.

Next, we find a \emph{proper} point $R_{1}\in\gamma$ to be the next point, as well as \emph{proper} points $S_{1}\in\delta$ and $R_{2}\in\gamma$. Then order the remaining points in $\gamma$ and $\delta$ just as what we have done for the points in $\alpha$ and $\beta$. Repeat the procedure  until we have covered $n$ ($6\leq n\leq q^{2}+1$)  points in $\mathcal{O}$. By now, we have got $n$ ordered points such that any four consecutive points do not lie on a plane.

Note that we have finished our first part. Denote the last four points by $W,X,Y$ and $Z$. To make sure that any four cyclically consecutive points do not lie on a plane, we still need to ensure that $X,Y,Z,A$ do not lie on a plane, nor do $Y,Z,A,B$ and nor do $Z,A,B,P_{1}$. We discuss in the following cases.

{\bf Case a:} $X,Y,Z$ and $A$ lie on a plane.

This happens only when $X\in\pi_{i}$, $Y\in\pi_{i+1}$ and $Z\in\pi_{i+2}$, for some $i$, $0\leq i\leq q-2$. For example, $P_{q-1},Q_{q-1}$ and $R_{1}$ in Figure \ref{fig1}. Otherwise, we always have exactly two of $X,Y,Z$ belonging to the same set $\pi_{j}\setminus\lbrace A,B\rbrace$, which ensures $X,Y,Z,A$ do not lie on a plane.

Note that $WXY$ intersects $\pi_{i+2}$ in at most two points and also $XYA$ intersects $\pi_{i+2}$ in at most two points, one of which is the point $A$. Thus, in this case, we find a new point $Z'$ in $\pi_{i+2}\setminus\lbrace A,B\rbrace$, not lying on planes $WXY$ and $XYA$ to be the new last point. We can always do this since there are totally $q+1\geq 6$ points on $\pi_{i+2}$.

{\bf Case b:} $Y,Z,A$ and $B$ lie on a plane.

This happens when the last two points lie in the same $\pi_{i}\setminus\lbrace A,B\rbrace$, which occurs in Cases $2$ and $3$ above. Note that $\alpha$ and $\beta$ can be any $\pi_{i}\setminus\lbrace A,B\rbrace$ and $\pi_{i+1}\setminus\lbrace A,B\rbrace$ respectively for $i=0,2,4,\cdots,q-1$ in the following discussion. In Case $2$, if the last three points are $Q_{q-4}$, $P_{q-3}$ and $P_{q-2}$, then we replace them by $Q_{q-4}$, $P_{q-3}$ and $Q_{q-3}$. If the last three points are $P_{q-2}$, $Q_{q-3}$ and $Q_{q-2}$,
then we replace them by $Q_{q-3}$, $P_{q-2}$ and $Q_{q-2}$. In Case $3$, if the last three points are $P_{q-2},Q_{q-2}$ and $Q_{q-1}$, then we replace them by $P_{q-2},P_{q-1}$ and $Q_{q-1}$.

{\bf Case c:} $Z,A,B$ and $P_{1}$ lie on a plane.

This happens when $Z$ lies in $\pi_{0}\setminus\lbrace A,B\rbrace$, i.e., $7\leq n\leq 2q-1$ and $n$ is odd.
In this case, after choosing the first three points $A,B,P_{1}$, we choose \emph{proper} points from $\pi_{2}\setminus\lbrace A,B\rbrace$ and $\pi_{3}\setminus\lbrace A,B\rbrace$ in turn.
\end{proof}

\begin{remark}
We use the condition that points $P_{q-4}$,$P_{q-3}$ and $P_{q-2}$ are on the same plane $\pi_{i}$, $0\le i\le q$, in Theorem {\rm \ref{thm1}}. Thus we exclude the case when $q=3$ since there are not enough points on each plane $\pi_{i}$. We give the MDS symbol-pair codes directly for $q=3$.  There exists a linear MDS $(n,6)_{3}$ symbol-pair code, $n\in\lbrace 6,7,8,9,10\rbrace$, whose parity check matrix is formed by the first $n$ columns of the matrix
$$\left[
  \begin{array}{cccccccccc}
    0&1&1&1&1&1&1&1&1&1\\
    1&0&1&2&1&2&2&1&2&1\\
    0&0&1&0&2&0&2&2&1&1\\
    0&0&1&1&2&2&1&0&2&0\\
  \end{array}
\right].$$
\end{remark}

\begin{theorem}\label{thm2}
Let $q\geq 8$ be an even prime power. Then there exist linear MDS $(n,6)_{q}$ symbol-pair codes for all $n$, $6\leq n\leq q^{2}+1$.
\end{theorem}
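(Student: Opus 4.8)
The plan is to follow the blueprint of Theorem \ref{thm1} as closely as possible. Fix an ovoid $\mathcal{O}$ in $PG(3,q)$ (an elliptic quadric exists for every prime power, in particular for even $q$), and let $\pi_0,\pi_1,\dots,\pi_q$ be the $q+1$ secant planes through a common secant line meeting $\mathcal{O}$ in the two fixed points $A,B$, as furnished by Lemma \ref{thmPG2}; each $\pi_i\cap\mathcal{O}$ is a conic, so each set $\pi_i\setminus\{A,B\}$ contains $q-1$ points and the total is $2+(q+1)(q-1)=q^2+1$, as it should be. By Theorem \ref{themcon} with $r=3$, it suffices to order any $n$ of these points, $6\le n\le q^2+1$, so that no four cyclically consecutive points are coplanar. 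Conditions 1 and 2 of Theorem \ref{themcon} hold automatically for ovoid points, since no three points of $\mathcal{O}$ are collinear and the four points $A,B,P_1,P_2$ of a single conic $\pi_i\cap\mathcal{O}$ are coplanar, so only condition 3 must be arranged.

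First I would reuse the greedy engine from Theorem \ref{thm1}: place $A,B$ first, then process the planes two at a time, alternately choosing \emph{proper} points from a pair $(\alpha,\beta)=(\pi_i,\pi_{i+1})$. The counting argument is unchanged, the plane through the three most recently placed points meets any $\pi_j$ in a line and hence in at most two points of $\mathcal{O}$, so at most two points of $\pi_j\setminus\{A,B\}$ are ever forbidden; as $q-1\ge 7$ this always leaves proper points available. The local analysis ordering the last two points within a pair-block (the analogues of Cases 1--3) and the cyclic closure onto $A,B,P_1$ (the analogues of Cases a--c) transfer, using the fact that each plane carries $q+1\ge 6$ points of $\mathcal{O}$.

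The essential new difficulty is parity. In Theorem \ref{thm1} the hypothesis that $q$ is odd makes $q+1$ even, so the planes split perfectly into pairs $(\pi_0,\pi_1),(\pi_2,\pi_3),\dots,(\pi_{q-1},\pi_q)$. For even $q$ the number $q+1$ is odd, so exactly one plane, say $\pi_q$, is left without a partner and the pairing scheme alone cannot cover all of $\mathcal{O}$. The remedy I would pursue is to replace the final pair-block by a block of three consecutive planes $\pi_{q-2},\pi_{q-1},\pi_q$, ordering their $3(q-1)$ points in round-robin fashion and selecting each successive point to be proper relative to the three points just before it. The main obstacle is precisely the bookkeeping inside and at the boundaries of this triple block: I must re-derive the endgame cases governing its last few points, the seam where it abuts the preceding pair-blocks, and the final cyclic closure, checking in each configuration that the forbidden points, coming from at most two secant lines per plane, never exhaust a set of size $q-1$. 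This triple-block endgame, together with the $q+1\ge 6$ room needed for the closure adjustments, is what forces the restriction to even prime powers $q\ge 8$ and constitutes the crux of the argument.
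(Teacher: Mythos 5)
Your overall strategy coincides with the paper's: since $q+1$ is odd, the pairing scheme of Theorem \ref{thm1} must be repaired by treating three of the $q+1$ planes as a single round-robin block and pairing the remaining $q-2$ planes (an even number) as before. The paper, however, places this triple block at the \emph{beginning} ($\alpha=\pi_0$, $\beta=\pi_1$, $\gamma=\pi_2$), whereas you place it at the end, and this difference is not cosmetic. The whole substance of the paper's proof is the seam you defer: it exhibits an explicit interleaving $P_{q-2},Q_{q-2},R_{q-2},S_{1},P_{q-1},T_{1},Q_{q-1},S_{2},R_{q-1},T_{2},S_{3},T_{3}$, in which the last points of the triple block are separated by freshly chosen \emph{proper} points $S_i\in\pi_3\setminus\{A,B\}$ and $T_i\in\pi_4\setminus\{A,B\}$ belonging to the following pair block; each such point is chosen to avoid two previously determined planes, hence at most four of the $q-1\ge 7$ points of its plane, so the greedy choice never gets stuck, and afterwards an even number of untouched planes remains and Theorem \ref{thm1} takes over. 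Your proposal stops exactly where this work begins: you state that you ``must re-derive the endgame cases'' for the triple block, its seam, and the cyclic closure, but you do not carry this out, and it is precisely the crux you yourself identify.

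Moreover, your placement of the triple block at the end makes the missing step genuinely harder than the paper's version, not merely unwritten. At the tail of a \emph{final} triple block the three planes $\pi_{q-2},\pi_{q-1},\pi_q$ are exhausted simultaneously: the last three points, one from each plane, are forced, so no greedy freedom remains, and these forced points must in addition satisfy the wrap-around conditions with $A,B,P_1$ (the analogues of Cases a--c of Theorem \ref{thm1}). The paper's device---padding the tail of its triple block with points borrowed from the next two planes, so that every step retains at least $q-1-4$ admissible candidates---is unavailable to you, because after the last plane the cycle closes onto $A,B,P_1$ and no fresh points exist to insert. One could plausibly rescue your variant by backtracking and reordering earlier points, in the spirit of the paper's Cases 1--3 and Case b, but that case analysis is exactly what is absent; as it stands, the proposal is an outline whose decisive step is missing.
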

\begin{proof}
Let the notations be defined as in Theorem \ref{thm1}. Note that the case when $q$ is even is different from that when $q$ is odd due to there being an odd number of planes. For $6\leq n\leq q^{2}-q+2$, we can order $n$ points on $\pi_{0},\pi_{1},\cdots,\pi_{q-1}$ just as in Theorem \ref{thm1} since the number of planes is even. The key step of this proof is to put the remaining $q-1$ points in order. To attain this goal, we first order all the points of the first three planes, and then we can just proceed as the case when $q$ is odd.

Let $\alpha,\beta,\gamma,\delta,\zeta$ denote the sets $\pi_{0}\setminus \lbrace A,B\rbrace,\pi_{1}\setminus \lbrace A,B\rbrace,\pi_{2}\setminus \lbrace A,B\rbrace,\pi_{3}\setminus \lbrace A,B\rbrace,\pi_{4}\setminus \lbrace A,B\rbrace$ respectively, as illustrated in Figure \ref{fig2}. Again, let $A$ and $B$ be the first two points and choose arbitrary $P_{1}$ and $Q_{1}$ from $\alpha$ and $\beta$ respectively. Choose the next point $R_{1}\in\gamma\setminus BP_{1}Q_{1}$, and then $P_{2}\in\alpha \setminus P_{1}Q_{1}R_{1}$ and $Q_{2}\in\beta\setminus P_{2}Q_{1}R_{1}$, i.e., take \emph{proper} points from $\alpha,\beta$ and $\gamma$ in turn. We can continue in this way until only one point remains in $\alpha$.

Suppose this has been done so that $P_{q-1}$ remains, i.e., we have ordered the points as $A,B,P_{1},Q_{1}$,\\$R_{1},\cdots,P_{q-2},Q_{q-2},R_{q-2}$. Note that the intersection of two planes meets $\mathcal{O}$ in at most two points. We can always find a point $S_{1}$ in $\delta$ that does not lie on the planes $P_{q-2}Q_{q-2}R_{q-2}$ and $P_{q-1}Q_{q-2}R_{q-2}$ since the two planes intersect $\delta$ in at most four points and there are $q-1\geq 7$ points in $\delta$. Similarly, we can find $T_{1}\in \zeta$ not lying on planes $P_{q-1}R_{q-2}S_{1}$ and $P_{q-1}Q_{q-1}S_{1}$, $S_{2}\in \delta$ not lying on planes $P_{q-1}Q_{q-1}T_{1}$ and $Q_{q-1}R_{q-1}T_{1}$. Next find $T_{2}\in \zeta\setminus Q_{q-1}R_{q-1}S_{2}$, $S_{3}\in \delta \setminus R_{q-1}S_{2}T_{2}$ and $T_{3}\in \zeta\setminus R_{q-1}S_{3}T_{2}$. Let the order of points be $P_{q-2},Q_{q-2},R_{q-2},S_{1},P_{q-1},T_{1},Q_{q-1},S_{2},R_{q-1},T_{2},S_{3},T_{3}$. Note that we have ordered all the points in $\alpha,\beta$ and $\gamma$, and any four consecutive points do not lie on a plane. There are an even number of planes left. We can then simply proceed as in Theorem \ref{thm1}, and also the similar discussion follows that of Theorem \ref{thm1}.
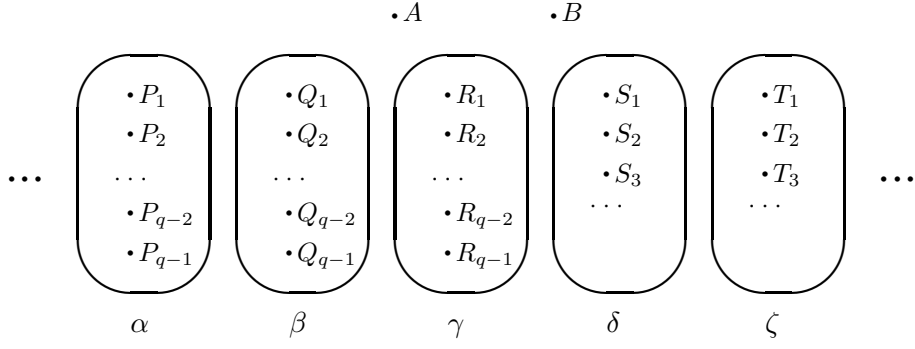
\begin{figure}[h]
\begin{center}
\begin{picture}(290,140)
\thicklines
\put(20,0){$\alpha$}\put(80,0){$\beta$}\put(140,0){$\gamma$}\put(200,0){$\delta$}\put(260,0){$\zeta$}
\multiput(25,60)(60,0){5}{\oval(50,90)}
\multiput(-25,58)(5,0){3}{\circle*{2}}
\multiput(305,58)(5,0){3}{\circle*{2}}

\multiput(20,90)(0,-15){2}{\circle*{2}}
\multiput(20,30)(0,15){2}{\circle*{2}}
\multiput(15,58)(5,0){3}{\circle*{1.2}}

\multiput(80,90)(0,-15){2}{\circle*{2}}
\multiput(80,30)(0,15){2}{\circle*{2}}
\multiput(75,58)(5,0){3}{\circle*{1.2}}

\multiput(140,90)(0,-15){2}{\circle*{2}}
\multiput(140,30)(0,15){2}{\circle*{2}}
\multiput(135,58)(5,0){3}{\circle*{1.2}}

\put(120,120){\circle*{2}}
\put(180,120){\circle*{2}}
\put(123,118){\small $A$}
\put(183,118){\small $B$}

\multiput(200,90)(0,-15){3}{\circle*{2}}
\multiput(260,90)(0,-15){3}{\circle*{2}}
\multiput(195,48)(5,0){3}{\circle*{1.2}}
\multiput(255,48)(5,0){3}{\circle*{1.2}}

\put(23,87){\small $P_{1}$}
\put(23,72){\small $P_{2}$}
\put(23,42){\small $P_{q-2}$}
\put(23,27){\small $P_{q-1}$}

\put(83,87){\small $Q_{1}$}
\put(83,72){\small $Q_{2}$}
\put(83,42){\small $Q_{q-2}$}
\put(83,27){\small $Q_{q-1}$}

\put(143,87){\small $R_{1}$}
\put(143,72){\small $R_{2}$}
\put(143,42){\small $R_{q-2}$}
\put(143,27){\small $R_{q-1}$}

\put(203,87){\small $S_{1}$}
\put(203,72){\small $S_{2}$}
\put(203,57){\small $S_{3}$}
\put(263,87){\small $T_{1}$}
\put(263,72){\small $T_{2}$}
\put(263,57){\small $T_{3}$}
\end{picture}
\end{center}\caption{The sets $\pi_{i}\setminus \lbrace A,B\rbrace$ when $q$ is an even prime power\label{fig2}.}
\end{figure}
\end{proof}

\begin{remark}
When $q=4$, since there are only five points on each plane $\pi_{i}$, $0\le i\le q$, we discuss it as a special case. Denote the primitive element of $\mathbb{F}_{4}$ as $w$. Then there is a linear MDS $(n,6)_{4}$ symbol-pair code, $n\in\lbrace6,8,9,10,11,12,13,14,15,16,17\rbrace$, and its parity check matrix is formed by the first $n$ columns of the matrix
$$\left[
    \begin{array}{ccccccccccccccccc}
      0&1&1&1&1  &1&1  &1&  1&1&1&  1&1&  1&  1&   1&  1\\
      1&0&1&w&1+w&1&w  &1+w&w&w&1+w&1&w&  1&  1+w& 1+w&1\\
      0&0&1&0&w  &0&1+w&0&  1&1&w&  w&w+1&w&  1+w& 1+w&1\\
      0&0&0&1&0  &w&0  &1+w&1&w&1&  w&w&  1+w&1+w& 1  &1+w\\
    \end{array}
  \right].
$$
There exists a linear MDS $(7,6)$ symbol-pair code with parity check matrix
$$\left[
    \begin{array}{ccccccc}
      0&1&1&1&1  &1&1\\
      1&0&1&w&1+w&1&w\\
      0&0&1&0&w  &0&1\\
      0&0&0&1&0  &w&w\\
    \end{array}
  \right].
$$
\end{remark}

Summing up the above, we can conclude the following theorem.
\begin{theorem}
For any prime power $q$, $q\geq 3$, and any integer $n$, $6 \leq n\leq q^{2}+1$, there exists a linear MDS $(n,6)_{q}$ symbol-pair
code.
\end{theorem}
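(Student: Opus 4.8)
The plan is to recognize that this is a consolidating statement whose content has already been established piecewise in the preceding results, so the proof amounts to a complete case split on the prime power $q$ together with a verification that the four cases exhaust all $q \geq 3$ and that each cited source delivers the \emph{entire} interval of lengths.

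First I would dispose of the generic cases by direct citation. For every odd prime power $q \geq 5$, Theorem \ref{thm1} already supplies a linear MDS $(n,6)_q$ symbol-pair code for each $n$ in the full range $6 \leq n \leq q^2+1$; and for every even prime power $q \geq 8$, Theorem \ref{thm2} does the same. These two results together cover all prime powers $q \geq 5$, odd or even.

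The remaining cases are the two small prime powers $q = 3$ and $q = 4$, which were deliberately excluded from Theorems \ref{thm1} and \ref{thm2} because the planes $\pi_i$ of the ovoid do not contain enough points for the ordering argument to run. For these I would invoke the explicit parity-check matrices recorded in the two remarks: the $4 \times 10$ matrix in the remark following Theorem \ref{thm1} yields codes for all $n \in \lbrace 6,7,8,9,10 \rbrace$ when $q = 3$, and the $4 \times 17$ matrix together with the separate $(7,6)$ matrix in the remark following Theorem \ref{thm2} yields codes for all $n \in \lbrace 6,7,\ldots,17 \rbrace$ when $q = 4$. Since $q^2+1 = 10$ for $q=3$ and $q^2+1 = 17$ for $q=4$, both explicit constructions attain precisely the required range $6 \leq n \leq q^2+1$.

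Finally I would observe that every prime power $q \geq 3$ is either odd and at least $5$, even and at least $8$, equal to $3$, or equal to $4$, so these four cases are mutually exclusive and exhaustive, which completes the proof. There is no genuine mathematical obstacle in this statement; the only point requiring care is bookkeeping---checking that the case split leaves no prime power uncovered and, for the two small exceptional values, that the listed matrices indeed realize the whole interval of lengths $n$ rather than a proper subinterval.
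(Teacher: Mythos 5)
Your proposal is correct and matches the paper's own argument: the paper proves this theorem by exactly the same consolidation (``Summing up the above''), combining Theorem \ref{thm1} for odd $q\geq 5$, Theorem \ref{thm2} for even $q\geq 8$, and the explicit parity-check matrices in the two remarks for $q=3$ and $q=4$, including the separate matrix handling $n=7$ when $q=4$. Nothing further is needed.
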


\begin{remark}
Compare the cases $r=2$ and $r=3$ in Theorem {\rm \ref{themcon}}, if we consider the set of all the points instead of the ovoid, all the lines through a fixed point instead of the planes described in Lemma {\rm \ref{thmPG2}}, then we can also get linear MDS $(n,5)_{q}$ symbol-pair codes for $5\leq n\leq q^{2}+q+1$ with $q$ being a prime power in a similar way. Thus, this method deserves further investigation for larger $r$, which may derive MDS symbol-pair codes with larger pair-distance. 
\end{remark}

\section{MDS symbol-pair codes from elliptic curves}\label{sec5}
The previous two sections construct MDS symbol-pair codes with pair-distance $5$ and $6$. In this section, we give a construction of MDS symbol-pair codes with general pair-distance ($\ge 7$) from elliptic curve codes. We first briefly review some facts about elliptic curve codes.

Let $E/\f{q}$ be an elliptic curve over  $\f{q}$
with function field $\f{q}(E)$. Let $E(\f{q})$ be the set of all $\f{q}$-rational points on $E$.
Suppose $D=\{P_{1},P_{2},\cdots,P_{n}\}$ is a proper subset of rational points $E(\f{q})$, and $G$ is a divisor of degree $k$
($2g-2<k<n$) with $\mathrm{Supp}(G)\cap D=\emptyset$.
Without any confusion, we also write $D=P_{1}+P_{2}+\cdots+P_{n}$.
Denote by $\mathscr{L}(G)$ the $\f{q}$-vector space of all
rational functions $f\in \f{q}(E)$ with the principal divisor
$\mathrm{div}(f)\geqslant -G$, together with the zero function
(\cite{Stichtenoth}).

The functional AG code $C_{\mathscr{L}}(D, G)$ is defined to be the image of the following evaluation map:
 \[
       ev: \mathscr{L}(G)\rightarrow \f{q}^{n};\, f\mapsto
       (f(P_{1}),f(P_{2}),\cdots,f(P_{n})).\enspace
\]

It is well-known that $C_{\mathscr{L}}(D, G)$ is a linear code with parameters $[n,k,d_{H}]$, where the minimum Hamming distance $d_{H}$ has two choices:
\[
  d_{H}=n-k,\,\mbox{ or }\, d_{H}=n-k+1.\
\]
A linear $[n,k,d_{H}]$ code is called an MDS code if $d_{H}=n-k+1$ and is called an almost MDS code if $d_{H}=n-k$.

Suppose $O$ is one of the $\f{q}$-rational
points on $E$. The set of rational points $E(\f{q})$ forms an abelian group
with zero element $O$ (for the definition of the sum of any two
points, we refer to \cite{Silverman09}), and it is isomorphic to the
Picard group
 $\mathrm{div}^o(E)/\mathrm{Prin}(\f{q}(E))$, where $\mathrm{Prin}(\f{q}(E))$
 is the subgroup consisting of all principal divisors. 

 Denote by $\oplus$ and $\ominus$ the additive and minus operator in the group $E(\f{q})$, respectively.

\begin{proposition}[\cite{Cheng08,ZFW14}]\label{ECC}
Let $E$ be an elliptic curve over $\f{q}$ with an $\f{q}$-rational point $O$,
$D=\{P_{1},P_{2},\cdots,P_{n}\}$ a subset of $E(\f{q})$ such that
 $O\notin D$ and let $G=kO$ ($0<k<n$). Endow $E(\f{q})$ a group structure with the zero element $O$.
Denote by
\[
   N(k,O,D)=|\{S\subset D\,:\,|S|=k,\,\oplus_{P\in S}P=O\}|.
\]
Then the AG code $C_{\mathscr{L}}(D, G)$ has the minimum Hamming distance $d_H=n-k+1$ if and only if
\[
  N(k,O,D)= 0.
\]
And the minimum Hamming distance $d_H=n-k$ if and only if
\[
  N(k,O,D)>0.
\]

\end{proposition}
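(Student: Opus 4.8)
The plan is to reduce the determination of the exact minimum Hamming distance to a counting condition on $k$-subsets, exploiting the fact that for an elliptic curve the group $E(\f{q})$ is canonically isomorphic to the degree-zero Picard group. First I would record the two standard facts that frame the problem. Since $E$ has genus $g=1$ and $\deg G = k > 0 = 2g-2$, the Riemann--Roch theorem gives $\dim_{\f{q}}\mathscr{L}(G)=k$ and makes $ev$ injective (because $\deg(G-D)<0$), so $C_{\mathscr{L}}(D,G)$ has dimension exactly $k$. The Goppa (AG) bound yields $d_H \ge n-\deg G = n-k$, while the Singleton bound gives $d_H \le n-k+1$; hence $d_H \in \{n-k,\,n-k+1\}$, as stated in the excerpt. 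Consequently it suffices to decide exactly when a codeword of the smallest possible weight $n-k$ occurs: we have $d_H=n-k$ precisely when such a codeword exists, and $d_H=n-k+1$ otherwise.

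Next I would translate a minimum-weight codeword into divisor-theoretic language. A nonzero $f\in\mathscr{L}(G)$ satisfies $\mathrm{div}(f)\ge -kO$, so its only pole is at $O$, of order $m\le k$; hence $f$ has exactly $m$ zeros counted with multiplicity. The codeword $ev(f)=(f(P_1),\dots,f(P_n))$ has weight $n$ minus the number of points $P_i$ at which $f$ vanishes. Since $f$ has at most $k$ zeros in all, the weight equals $n-k$ if and only if $f$ vanishes at exactly $k$ of the $P_i$; this forces $m=k$, with all $k$ zeros simple, distinct, and lying in $D$. Therefore a weight-$(n-k)$ codeword corresponds, up to scalar, to a function $f$ with
\[
\mathrm{div}(f)=\sum_{P\in S}(P)-kO, \qquad S\subseteq D,\ |S|=k,\ S \text{ consisting of distinct points.}
\]

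The final step is to detect when such an $f$ exists. The degree-zero divisor $\sum_{P\in S}(P)-kO=\sum_{P\in S}\big((P)-(O)\big)$ is principal if and only if it is trivial in the Picard group $\mathrm{div}^o(E)/\mathrm{Prin}(\f{q}(E))$. Under the group isomorphism $E(\f{q})\xrightarrow{\sim}\mathrm{Pic}^0(E)$, $P\mapsto[(P)-(O)]$, this class corresponds to $\oplus_{P\in S}P$. Thus the divisor is principal exactly when $\oplus_{P\in S}P=O$, i.e. when $S$ is one of the subsets counted by $N(k,O,D)$. Combining this with the previous paragraph, a weight-$(n-k)$ codeword exists if and only if $N(k,O,D)>0$, which yields both asserted equivalences.

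The main obstacle, and the step requiring the most care, is the passage from ``a principal divisor supported on $S\cup\{O\}$ exists'' to the clean group-sum identity $\oplus_{P\in S}P=O$. This relies on tracking the Picard-group isomorphism correctly and, crucially, on verifying that a genuine weight-$(n-k)$ codeword forces the $k$ zeros of $f$ to be simple, distinct, and contained in $D$, so that no zero is absorbed at $O$ or occurs with higher multiplicity; these geometric constraints are exactly what make the correspondence with $k$-subsets tight in both directions. The remaining ingredients---the dimension count and the $d_H\in\{n-k,\,n-k+1\}$ dichotomy---are routine applications of Riemann--Roch, the Goppa bound, and the Singleton bound.
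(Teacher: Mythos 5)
Your proposal is correct and takes essentially the same route as the paper's proof: both reduce the question to the existence of a weight-$(n-k)$ codeword, identify such a codeword with a function $f$ whose divisor is exactly $\sum_{P\in S}(P)-kO$ for a $k$-subset $S\subseteq D$, and then use the isomorphism $E(\f{q})\cong \mathrm{div}^o(E)/\mathrm{Prin}(\f{q}(E))$ to translate principality of that divisor into the subset-sum condition $\oplus_{P\in S}P=O$. Your version is merely more explicit about the routine ingredients (the Riemann--Roch dimension count, injectivity of $ev$, and the fact that a weight-$(n-k)$ codeword forces the $k$ zeros to be simple, distinct, and contained in $D$), which the paper leaves implicit.
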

\begin{proof}
We have already seen that the minimum distance of $C_{\mathscr{L}}(D, G)$ has two choices: $n-k$, $n-k+1$. So $C_{\mathscr{L}}(D, G)$ is not MDS, i.e., $d=n-k$ if and only if there is a function $f\in\mathscr{L}(G)$ such that the evaluation $ev(f)$ has weight $n-k$. This is equivalent to that $f$ has $k$ zeros in $D$, say $P_{i_1}, \cdots, P_{i_k}$. That is
\[
    \mathrm{div}(f)\geq -(k-1)O-P+(P_{i_1}+\cdots+P_{i_k}),
\]
which is equivalent to
\[
    \mathrm{div}(f)=-(k-1)O-P+(P_{i_1}+\cdots+P_{i_k}).
\]
The existence of such an $f$ is equivalent to saying
\[
  P_{i_1}\oplus\cdots\oplus P_{i_k}=P.
\]
Namely, $N(k,P,D)> 0.$
It follows that the AG code $C_{\mathscr{L}}(D, G)$ has the minimum Hamming distance $n-k+1$ if and only if
$ N(k,P,D)=0.\ $
\end{proof}

 We restrict to the case $n> q+1$, since for every length $n\le q+1$, MDS symbol-pair codes of length $n$ can be constructed from Reed-Solomon codes. In this case, the minimum Hamming distance $d_H$ of elliptic curve codes is related to the main conjecture of MDS codes which was affirmed for elliptic curve codes~\cite{LWZ15,Mun92}.

\begin{proposition}[\cite{LWZ15,Mun92}]\label{ECSSP}
Let $C_{\mathscr{L}}(D, G)$ be the elliptic curve code constructed in Proposition {\rm \ref{ECC}} with length $n> q+1$. Then the subset sum problem always has solutions, i.e.,
\[
  N(k,O,D)>0.
\]
And hence, elliptic curve codes with length $n> q+1$ have deterministic minimum Hamming distance $d_H=n-k$.
\end{proposition}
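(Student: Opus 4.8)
The plan is to reduce the statement to the single claim $N(k,O,D)>0$ and then to derive this from the MDS main conjecture, as it is known for elliptic curve codes. By Proposition \ref{ECC}, the two possible minimum distances $n-k$ and $n-k+1$ are separated precisely by whether $N(k,O,D)$ is positive or zero; hence proving $N(k,O,D)>0$ is the same as proving that $C_{\mathscr{L}}(D,G)$ is \emph{not} MDS, and it immediately yields $d_H=n-k$. So the whole proposition collapses to the inequality $N(k,O,D)>0$ under the hypothesis $n>q+1$, together with the implicit restriction $2\le k\le n-2$ in which the main conjecture is meaningful (the extreme values $k=1$ and $k=n-1$ give genuinely MDS codes, for which $N$ can vanish, and are excluded).

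First I would translate the non-MDS assertion into projective geometry. An $[n,k]$ linear code is MDS exactly when the columns of a generator matrix form an $n$-arc in $PG(k-1,q)$, i.e. any $k$ of them are linearly independent; for $C_{\mathscr{L}}(D,kO)$ these columns are the images of the points of $D$ under the evaluation map attached to $\mathscr{L}(kO)$. Munuera \cite{Mun92} established that a $q$-ary elliptic curve code which is MDS has length at most $q+1$ (the classical length-$(q+2)$ exceptions for even $q$ are conic/hyperoval arcs of genus $0$ and do not arise from an elliptic curve; duality of MDS codes lets one assume $k$ lies in the range where the bound applies). Since our hypothesis is $n>q+1$, no $n$-arc of this type can exist, so the code is not MDS, and therefore $N(k,O,D)>0$ by Proposition \ref{ECC}.

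The arithmetic heart, carried out in \cite{LWZ15}, is a direct proof that this subset sum is positive, which I would organise as a character-sum estimate over the finite abelian group $A=E(\f q)$ of order $N$. Writing $e_k(\chi)$ for the $k$-th elementary symmetric function of $\{\chi(P):P\in D\}$, orthogonality of characters gives $N(k,O,D)=\frac1N\sum_{\chi}e_k(\chi)$, whose principal term (from the trivial character) is the large main term $\frac1N\binom nk$. The two quantitative inputs are the Hasse bound $|N-q-1|\le 2\sqrt q$, which forces $t:=N-n<2\sqrt q$ to be small, and the remark that for every nontrivial $\chi$ one has $\sum_{P\in A}\chi(P)=0$, so that $|\sum_{P\in D}\chi(P)|=|\sum_{P\in A\setminus D}\chi(P)|\le t$. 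These together bound the power sums $\sum_{P\in D}\chi(jP)$ by $t$ whenever $\chi^{j}$ is nontrivial, and by $n$ otherwise; the goal is then to show $|\sum_{\chi\ne\chi_0}e_k(\chi)|<\binom nk$, so that the main term dominates.

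The step I expect to be the main obstacle is exactly this propagation from power sums to the elementary symmetric functions $e_k(\chi)$, equivalently from ordered tuples with repetition to \emph{distinct} $k$-subsets. This is not formal and is handled by the Li--Wan distinct-coordinate sieve, which writes the distinct count as a weighted sum over set partitions of the power sums $\sum_{P\in D}\chi(cP)$. The delicate case is that of partitions whose block sizes are multiples of the order of $\chi$: there the relevant power sums jump from size $O(t)$ up to the full value $n$, so one must verify that the accompanying combinatorial weights, combined with the smallness $t<2\sqrt q$, keep the total error strictly below $\binom nk$. Once that bound is secured, the main term dominates, $N(k,O,D)>0$ follows for $n>q+1$, and with it the deterministic value $d_H=n-k$.
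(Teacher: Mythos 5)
The paper offers no proof of Proposition \ref{ECSSP} to compare against: it is stated with the citation \cite{LWZ15,Mun92} and used as a black box, the only follow-up being the remark that such codes are almost MDS. Your proposal therefore has to be judged as a reconstruction of the cited literature, and as such it is faithful. The reduction through Proposition \ref{ECC} is exactly the intended reading; the geometric prong (MDS $\Leftrightarrow$ the evaluation points form an $n$-arc in $PG(k-1,q)$, then the main conjecture for elliptic codes) is Munuera's route in \cite{Mun92}; and the analytic prong --- $N(k,O,D)=\frac1N\sum_\chi e_k(\chi)$ by orthogonality, main term $\frac1N\binom nk$, the bound $\bigl|\sum_{P\in D}\chi(P)\bigr|\le t=N-n<2\sqrt q$ from Hasse--Weil together with $\sum_{P\in E(\f{q})}\chi(P)=0$, and the Li--Wan distinct-coordinate sieve to turn power-sum bounds into bounds on $e_k(\chi)$ --- is precisely the mechanism of \cite{LWZ15}, including your correct identification of the genuinely delicate case (blocks whose size is divisible by the order of $\chi$). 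Two further remarks. First, your observation that the claim implicitly requires $2\le k\le n-2$ is correct and worth recording: with the range $0<k<n$ inherited from Proposition \ref{ECC}, the statement is literally false at $k=1$, where $N(1,O,D)=0$ because $O\notin D$ and the code is the MDS repetition code, and it fails at $k=n-1$ exactly when $\oplus_{P\in D}P\notin D$. Second, the one soft spot in your argument is the parenthetical dismissal of the $q$ even, length-$(q+2)$ exceptions to the main conjecture: the assertion that a hyperoval cannot lie on a plane cubic is not background one may wave away --- it is exactly the $k=3$, $n=q+2$ instance of the proposition being proved, since a $3$-dimensional elliptic code is given by points on a plane cubic and the $(q+2)$-arc condition is $N(3,O,D)=0$. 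Either quote Munuera's theorem in a form that already covers those cases, or note that your character-sum argument treats $k=3$ uniformly and so absorbs them; as written, that sentence is circular. With that repaired, your proposal is correct, and it is essentially the proof that the paper chose to import rather than reproduce.
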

That is, elliptic curve codes with length $n> q+1$ are almost MDS codes.
To obtain long codes from elliptic curves, we need the following two well-known results of elliptic curves over finite fields.
\begin{lemma}[Hasse-Weil Bound~\cite{Silverman09}]\label{HasseWeil}
Let $E$ be an elliptic curve over $\f{q}$. Then the number of $\f{q}$-rational points on $E$ is bounded by
\[
    |E(\f{q})|\le q+\lfloor 2\sqrt{q}\rfloor+1.
\]
\end{lemma}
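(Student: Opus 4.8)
The plan is to prove the bound through the $q$-power Frobenius endomorphism and the positivity of the degree form on the endomorphism ring $\mathrm{End}(E)$, which is the classical route to Hasse's theorem. Let $\phi\colon E\to E$ denote the Frobenius endomorphism, acting on coordinates by $x\mapsto x^{q}$. Its fixed points are exactly the $\f{q}$-rational points, so $E(\f{q})=\ker(1-\phi)$, where $1$ denotes the identity map. First I would verify that the isogeny $1-\phi$ is separable: since $\phi$ induces the zero map on the cotangent space (its ``derivative'' vanishes), the differential of $1-\phi$ coincides with that of the identity and is therefore nonzero. Consequently $\#\ker(1-\phi)=\deg(1-\phi)$, which yields the clean identity $|E(\f{q})|=\deg(1-\phi)$.

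The key structural input is that the degree map $\deg\colon\mathrm{End}(E)\to\mathbb{Z}$ is a positive definite quadratic form: $\deg(\psi)\ge 0$ with equality only for $\psi=0$, and the associated pairing $\langle\psi,\chi\rangle=\deg(\psi+\chi)-\deg(\psi)-\deg(\chi)$ is $\mathbb{Z}$-bilinear. Granting this, and writing $a=q+1-|E(\f{q})|$ for the trace of Frobenius (so that $\deg(1)=1$, $\deg(\phi)=q$, and $\langle 1,\phi\rangle=a$), expanding the quadratic form gives, for all integers $m,n$,
\[
    0\le\deg(m-n\phi)=m^{2}-amn+qn^{2}.
\]
Since this binary form with leading coefficients $1$ and $q$ is nonnegative for every rational ratio $m/n$ (hence, by continuity, for all real arguments), the quadratic $t^{2}-at+q$ is nonnegative for all real $t$, so its discriminant must be nonpositive: $a^{2}-4q\le 0$, that is $|a|\le 2\sqrt{q}$. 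Therefore $|E(\f{q})|=q+1-a\le q+1+2\sqrt{q}$, and since the left-hand side is an integer we may replace $2\sqrt{q}$ by $\lfloor 2\sqrt{q}\rfloor$ to obtain the stated bound.

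The main obstacle is justifying the positive definiteness and bilinearity of the degree form, which is not a formal triviality: it rests on the theory of isogenies of elliptic curves, the multiplicativity $\deg(\psi\chi)=\deg(\psi)\deg(\chi)$, and the fact that $\deg(m\cdot\mathrm{id})=m^{2}$. In a self-contained treatment I would either cite this directly from the standard references on elliptic curves (e.g.\ \cite{Silverman09}) or derive bilinearity from the identity relating $\deg$ to the intersection pairing on divisors. Once that positivity is in hand, the remaining computation is the elementary discriminant argument above, and the passage from $2\sqrt{q}$ to its floor is immediate from integrality.
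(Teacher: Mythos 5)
Your proposal is correct: it is the classical Hasse argument (identifying $E(\f{q})$ with $\ker(1-\phi)$ for the Frobenius $\phi$, separability of $1-\phi$, positive definiteness and bilinearity of the degree form, the discriminant bound $|a|\le 2\sqrt{q}$, and integrality of $a$ to pass to $\lfloor 2\sqrt{q}\rfloor$). The paper gives no proof of this lemma at all --- it is quoted directly from the cited reference \cite{Silverman09} --- and the argument you present is precisely the standard one found there, so your proof matches the paper's (delegated) approach.
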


\begin{lemma}[Hasse-Deuring~\cite{Deuring41}]\label{HasseDeuring}
The maximal number $N(\f{q})$ of $\f{q}$-rational points on $E$, where $E$ runs over all elliptic curves over $\f{q}$, is
\[
N(\f{q})=\left\{
 \begin{array}{ll}
 q+\lfloor 2\sqrt{q}\rfloor, &\hbox{if $q=p^a,\,a\ge 3$, $a$ odd and $p|\lfloor 2\sqrt{q}\rfloor$};\\
 q+\lfloor 2\sqrt{q}\rfloor+1, &\hbox{otherwise.}
 \end{array}
\right.
\]
\end{lemma}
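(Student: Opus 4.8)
The plan is to recast the problem in terms of the trace of Frobenius and then invoke the Deuring--Waterhouse classification of attainable traces, after which everything reduces to elementary arithmetic with the floor function. For an elliptic curve $E/\f{q}$ write $|E(\f{q})|=q+1-t$, where $t=t(E)$ is the trace of the $q$-power Frobenius endomorphism; by the Hasse--Weil bound (Lemma~\ref{HasseWeil}) one has $|t|\le 2\sqrt{q}$. Since maximizing the number of points is the same as minimizing $t$, the problem becomes: determine the least integer $t$ in $[-2\sqrt{q},2\sqrt{q}]$ that is realized as $t(E)$ for some $E/\f{q}$, so that $N(\f{q})=q+1-t_{\min}$.

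The only external input I would use beyond Lemma~\ref{HasseWeil} is the portion of the trace classification~\cite{Deuring41} describing which $t$ occur: an integer $t$ with $|t|\le 2\sqrt{q}$ is a trace if and only if either $\gcd(t,p)=1$, or $t$ is \emph{supersingular}; and in the latter case, writing $q=p^{a}$, the supersingular values are $t=\pm 2\sqrt{q}$ when $a$ is even, while for $a$ odd the only nonzero supersingular values are $t=\pm p^{(a+1)/2}$, occurring solely for $p\in\{2,3\}$ (the value $t=0$ always occurs). Granting this, the determination of $t_{\min}$ splits into two cases.

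If $a$ is even then $q$ is a perfect square, $2\sqrt{q}=\lfloor 2\sqrt{q}\rfloor\in\mathbb{Z}$, and the endpoint $t=-2\sqrt{q}$ is an allowed supersingular trace; hence $t_{\min}=-2\sqrt{q}$ and $N(\f{q})=q+\lfloor 2\sqrt{q}\rfloor+1$. If $a$ is odd then $q$ is not a square, so $m:=\lfloor 2\sqrt{q}\rfloor<2\sqrt{q}$, placing $-m$ strictly inside the Hasse interval. When $p\nmid m$ the value $-m$ is an ordinary trace, so $t_{\min}=-m$ and $N(\f{q})=q+\lfloor 2\sqrt{q}\rfloor+1$. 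When $p\mid m$ we have $-m\neq 0$; for $a\ge 3$ the value $-m$ cannot be supersingular, since the only candidates are $\pm p^{(a+1)/2}=\pm\sqrt{p}\,\sqrt{q}$ and a short estimate gives $p^{(a+1)/2}\neq m$, so $-m$ is unattainable, while $-(m-1)$ satisfies $\gcd(m-1,p)=1$ and is an ordinary trace, yielding $t_{\min}=-(m-1)$ and $N(\f{q})=q+\lfloor 2\sqrt{q}\rfloor$. (For $a=1$, where $p\mid m$ forces $p\in\{2,3\}$ and $m=p$, the value $-m=-p^{(a+1)/2}$ is supersingular and attainable, returning us to the generic branch.) These are exactly the two branches of the asserted formula, the deficient branch occurring precisely when $a\ge 3$ is odd and $p\mid\lfloor 2\sqrt{q}\rfloor$.

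The genuinely hard part is the trace classification itself, which I would cite rather than reprove. Its existence direction --- producing, for each admissible $t$, an elliptic curve over $\f{q}$ with exactly that Frobenius trace --- rests on Deuring's theory of complex multiplication and reduction of CM curves (equivalently, the one-dimensional case of Honda--Tate theory), realizing the ordinary traces with $p\nmid t$ through reductions of CM curves and the extremal supersingular traces through explicit supersingular models. The non-attainability of $-m$ when $p\mid m$ likewise comes from the structure of the endomorphism ring, in which $p\mid t$ forces the curve to be supersingular and thereby constrains $t$ to the short explicit list. On top of these inputs, the proof is just the floor-function bookkeeping carried out above.
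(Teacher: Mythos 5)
The paper gives no proof of this lemma at all: it is quoted as a classical fact with a citation to Deuring, so there is no internal argument to compare yours against. Your proposal is a correct derivation, but note that its logical content is a reduction to the Deuring--Waterhouse classification of attainable Frobenius traces, which you cite rather than prove; in substance you, like the paper, delegate the genuinely hard part to the literature, and what you add is the (correct) floor-function bookkeeping converting that classification into the two-branch formula for $N(\f{q})$. That bookkeeping checks out: for $a$ even the endpoint $t=-2\sqrt{q}$ is an attainable supersingular trace; for $a$ odd with $p\nmid m$, $m=\lfloor 2\sqrt{q}\rfloor$, the value $-m$ is an ordinary trace; for $a\ge 3$ odd with $p\mid m$ one has $p^{(a+1)/2}=\sqrt{p}\,\sqrt{q}<2\sqrt{q}-1<m$ for $p\in\{2,3\}$ (and no nonzero supersingular traces exist at all for $p\ge 5$, $a$ odd), so $-m$ is unattainable while $-(m-1)$ is ordinary, giving the deficient branch; and your $a=1$ exception ($p\in\{2,3\}$, $m=p$, $-m$ supersingular) is exactly why the lemma restricts the deficient case to $a\ge 3$. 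Two small inaccuracies in your recalled classification deserve flagging, though neither harms the argument: for $a$ even the nonzero supersingular traces also include $\pm\sqrt{q}$ (when $p\not\equiv 1\pmod 3$), and $t=0$ is \emph{not} always attainable (it fails for $a$ even, $p\equiv 1\pmod 4$); both concern only square $q$, where $t=-2\sqrt{q}$ already realizes the maximum, so your case analysis stands.
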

Denote by
\[
\delta(q)=\left\{
  \begin{array}{ll}
    0, & \hbox{if $q=p^a,\,a\ge 3$, $a$ odd and $p\,|\,\lfloor 2\sqrt{q}\rfloor$;} \\
    1, & \hbox{otherwise.}
  \end{array}
\right.
\]

To construct an MDS symbol-pair code from classical error-correcting codes with large minimum Hamming distance, the key step is to find a way of ordering the coordinates. For general codes, this step seems very difficult. In the rest of this paper, we deal with the case of elliptic curve codes.

\begin{theorem}
Let $N(\f{q})=q+\lfloor 2\sqrt{q}\rfloor+\delta(q)$. Then for any $7\le d+2\leq n\le N(\f{q})-3$, there exist linear
MDS symbol-pair  codes  over $\f{q}$  with parameters $(n,d+2)_{q}$.

\end{theorem}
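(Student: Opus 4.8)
The plan is to realize the desired code as an elliptic curve code and then reduce the pair-distance requirement to a purely combinatorial ordering problem in the group $E(\f{q})$. First I would fix an elliptic curve $E/\f{q}$ attaining the maximum $|E(\f{q})|=N(\f{q})$, which exists by Lemma \ref{HasseDeuring}, and designate an $\f{q}$-rational point $O$ as the group identity. Put $d_{H}=d$ and $k=n-d$, so that $2\le k\le n-5$, and take $G=kO$. It suffices to treat $n>q+1$, since for $n\le q+1$ the required codes already come from Reed-Solomon codes. For any $D=\{P_{0},\dots,P_{n-1}\}\subseteq E(\f{q})\setminus\{O\}$ the code $\mathcal{C}=C_{\mathscr{L}}(D,G)$ has dimension $k=n-d$, and by Proposition \ref{ECSSP} (applicable because $n>q+1$) it is almost MDS, i.e.\ an $[n,n-d,d]$ linear code. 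By Theorem \ref{thmcite} it then remains only to arrange that $\mathcal{C}$ has pair-distance at least $d+2$, and the whole content of the proof lies in choosing and ordering $D$ so that this holds.

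Next I would pin down which codewords can obstruct pair-distance $d+2$. If a nonzero codeword has Hamming weight $w>d$ then Proposition \ref{prop1} already gives pair-weight at least $w+1\ge d+2$ (the weight-$n$ case being immediate). Hence only the minimum-weight codewords, of Hamming weight $d<n$, are dangerous, and for a codeword whose support splits into $t$ cyclic runs the pair-weight equals $d+t$; this is at least $d+2$ unless the $d$ nonzero positions are cyclically consecutive, equivalently unless the complementary $k$ zero positions form a cyclic interval. By Proposition \ref{ECC} the zero sets of the minimum-weight codewords are exactly the $k$-subsets $S\subseteq D$ with $\oplus_{P\in S}P=O$. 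Therefore $\mathcal{C}$ has pair-distance at least $d+2$ \emph{if and only if} the points of $D$ can be cyclically ordered as $P_{0},\dots,P_{n-1}$ so that no block of $k$ cyclically consecutive points sums to $O$, that is, $P_{i+1}\oplus\cdots\oplus P_{i+k}\ne O$ for every $i$ (indices modulo $n$).

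It remains to produce such an ordering from the pool $E(\f{q})\setminus\{O\}$, which contains at least $n+2$ points precisely because $n\le N(\f{q})-3$; this is the heart of the argument. I would build the cyclic sequence greedily, placing $P_{0},P_{1},\dots$ one at a time: when $P_{j}$ is placed it completes the length-$k$ window ending at position $j$, and to keep that window from summing to $O$ it suffices that $P_{j}$ avoid the single value $\ominus(P_{j-k+1}\oplus\cdots\oplus P_{j-1})$ in addition to the points already used. Since the pool has at least $n+2$ elements, before each of the $n$ placements at least three unused points remain, so after excluding the one forbidden value at least two admissible choices are available; thus every window that does \emph{not} cross the seam between $P_{n-1}$ and $P_{0}$ can be controlled, and at least two spare points remain unused at the end.

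The genuine difficulty is the cyclic closure: the windows straddling the seam, namely $P_{i+1}\oplus\cdots\oplus P_{i+k}$ for the $k-1$ indices $i$ that wrap past $P_{n-1}$, couple the last placed points with the initial ones and are not controlled by the forward process. I would repair them using the still-unused spare points together with local re-orderings and substitutions among the first and last few points, in the spirit of the endgame analyses in the proofs of Theorems \ref{thm1} and \ref{thm2} (Cases $1$--$3$ and $a$--$c$ there). The step I expect to be the main obstacle is verifying that these repairs can always be performed so that every seam window is cleared simultaneously while no previously good window is spoiled; the two guaranteed spare points — the reason for the gap $N(\f{q})-3$ rather than $N(\f{q})-1$ — are exactly the slack that makes this possible. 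Once a valid ordering is secured, Theorem \ref{thmcite} yields the linear MDS $(n,d+2)_{q}$ symbol-pair code, establishing the claim for all $7\le d+2\le n\le N(\f{q})-3$.
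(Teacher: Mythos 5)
Your reduction is exactly the paper's: take a maximal elliptic curve (Lemma \ref{HasseDeuring}), set $G=kO$ with $k=n-d$, invoke Proposition \ref{ECSSP} to get an almost MDS $[n,n-d,d]$ code, and reduce, via Theorem \ref{thmcite} and Proposition \ref{ECC}, to arranging the evaluation points so that no $k$ cyclically consecutive points sum to $O$. Up to that point your argument is sound. But the paper's actual work is precisely the construction of such a cyclic ordering, and this is where your proposal has a genuine gap: the greedy-plus-repair scheme does not, and cannot, close the seam. Note that every one of the $k-1$ wrapping windows contains the position $n-1$, so \emph{all} of them are completed simultaneously at the very last placement. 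Together with the final non-wrapping window, the point $P_{n-1}$ must therefore avoid up to $k$ forbidden values, while the pool of unused points has size only $N(\f{q})-n\ge 3$. Whenever $k\ge N(\f{q})-n$ (e.g.\ $k\ge 3$ when $n=N(\f{q})-3$, which happens for all $d\le n-3$), this counting fails, and the same obstruction defeats any local repair using the two spare points: replacing $P_{n-1}$ (or any single late point) must clear up to $k$ window constraints at once with only a constant number of candidate values. A repair argument would need roughly $k$ spares, i.e.\ a hypothesis like $n\le N(\f{q})-k$, which is a substantially weaker theorem. So the step you flagged as ``the main obstacle'' is not a technicality to be patched in the spirit of Theorems \ref{thm1} and \ref{thm2}; it is the theorem.

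The paper avoids this trap by never running a linear greedy at all: it exploits the group structure globally. When $N(\f{q})$ is odd there is no $2$-torsion, so the nonzero points split into inverse pairs $P_1\oplus P_2=P_3\oplus P_4=\cdots=O$, and listing them pairwise makes every window sum telescope down to the sum of at most two leftover points, which cannot be $O$; this handles all windows, wrapping ones included, in one stroke (for $k$ odd), and the even-$k$ and even-$N(\f{q})$ cases are then treated by inserting a few carefully chosen points, a bounded number of SWITCH corrections confined to the tail, and use of the $2$-torsion elements. Shorter lengths follow by deleting inverse pairs, which preserves the cancellation structure. If you want to salvage your approach, you would need some device with the same cyclic symmetry as the pairing trick; an inherently one-directional greedy with $O(1)$ slack cannot produce it.
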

\begin{proof}
The existence of MDS symbol-pair codes with parameters $d+2=n$ follows from~\cite{CJKWY}. Below we  only consider the case $7\le d+2< n\le N(\f{q})-3$. By Lemma~\ref{HasseDeuring}, take $E$ to be a maximal elliptic curve over $\f{q}$ with an $\f{q}$-rational point $O$, i.e.,
\[
   |E(\f{q})|=N(\f{q}).
\]
Take divisor $G=kO$ in the construction of elliptic curve codes.

Case (I):  $N=N(\f{q})$ is odd, then there is no element of order $2$ in $E(\f{q})$. Suppose
$$E(\f{q})=\{P_{1},P_{2},\cdots,P_{N-2}, P_{N-1}, O\},$$
  where $P_1\oplus P_2=P_3\oplus P_4=\cdots=P_{N-2}\oplus P_{N-1}=O$.
  \begin{enumerate}
    \item For odd $d$ and even  $n:\,7\le d+2<n\le N-1$, in this case $k=N-1-d$ is odd. Take
    \[
       D=\{P_{1},P_{2},\cdots,P_{N-2}, P_{N-1}\}.
    \]
    Then by Proposition~\ref{ECSSP}, there are no $k$ cyclically consecutive points whose sum is $O$. And hence, the elliptic curve code $C_{\mathscr{L}}(D, G)$ is an MDS symbol-pair code with parameters $(N-1,d+2)_{q}$. By deleting pairs $(P_1,P_2),(P_3,P_4)$, etc., we can obtain MDS symbol-pair  codes with parameters $(n,d+2)_{q}$, where $n$ runs over all even integers $7\le d+2<n\le N-1$.

     \item For even $d$ and odd  $n:\,7\le d+2<n\le N-2$, in this case $k=N-2-d$ is odd. Take
    \[
       D=\{P_{1},P_{2},\cdots,P_{N-2}\}.
    \]
    Then by Proposition~\ref{ECSSP}, there are no $k$ cyclically consecutive points whose sum is $O$. And hence,  the elliptic curve code $C_{\mathscr{L}}(D, G)$ is an MDS symbol-pair code with parameters $(N-2,d+2)_{q}$. By deleting pairs $(P_1,P_2),(P_3,P_4)$, etc., we can obtain MDS symbol-pair  codes with parameters $(n,d+2)_{q}$ where $n$ runs over all odd integers $d+2<n\le N-2$.
    \item For even $d$ and even  $n:\,7\le d+2<n\le N-3$, in this case $k=N-3-d$ is even. Write $N-3=(k+1)s+r$ for some integers $s\ge 1$ and $0\le r\le k$.
    Take the pre-evaluation set
    \[
      D_0=\{P_{1},P_{2},\cdots,P_{N-5},P_{N-4},P_{N-2}\}
    \]
    and arrange it by the following algorithm:

    \textbf{Step 1.} Arrange $D_0$ as following
\begin{equation*}
    \begin{array}{rl}
      D_1=&\{P_{1},\cdots,P_{k-1},P_{N-5},P_k,\cdots,P_{(s-1)k-1},P_{N-3-s},P_{(s-1)k},\\
       &\cdots,P_{sk-1},P_{N-4},P_{sk}, P_{sk+1},\cdots,P_{sk+r-1},P_{N-2}\}.\\
    \end{array}
\end{equation*}

    After this step, there are no $k$ consecutive  points whose sum is $O$ in the sequence
    \[
      P_{1},\cdots,P_{k-1},P_{N-5},P_k,\cdots,P_{(s-1)k-1},P_{N-3-s},P_{(s-1)k},\cdots,P_{sk-1},P_{N-4},P_{sk}, P_{sk+1},\cdots,P_{sk+r-2}.
    \]
    But there may be some $k$ cyclically consecutive points whose sum is $O$ in the tail sequence
    \[
     P_{(s-1)k+r+1},\cdots,P_{sk-1},P_{N-4},P_{sk}, P_{sk+1},\cdots,P_{sk+r-1},P_{N-2}, P_{1},\cdots,P_{k-r-1}.
    \]
    For instance, $k=6$, $N=19$, by Step~1, we get
    \[
     D_1={P_1,\cdots,P_5,P_{14},P_6,\cdots,P_{11},P_{15},P_{12},P_{13},P_{17}}.
    \]
    There are no $6$ consecutive  points whose sum is $O$ in the sequence
    \[
    P_1,\cdots,P_5,P_{14},P_6,\cdots,P_{11},P_{15},P_{12},P_{13}.
    \]
    But there may be some $6$ cyclically consecutive  points whose sum is $O$ in the tail sequence
    \[
      P_{10},P_{11},P_{15},P_{12},P_{13},P_{17},P_1,P_2.
    \]

    \textbf{Step 2.} In the case that $r$ is even. It is easy to see that at most one of the following two equalities holds:
     \[
       P_{(s-1)k+r+2}\oplus\cdots\oplus P_{N-4}\oplus\cdots\oplus P_{N-2}=P_{(s-1)k+r+2}\oplus P_{N-4}\oplus P_{sk+r-1}\oplus P_{N-2}=O,
    \]
    and
     \[
       P_{(s-1)k+r+3}\oplus\cdots\oplus P_{N-4}\oplus\cdots\oplus P_{N-2}\oplus P_1= P_{N-4}\oplus P_{sk+r-1}\oplus P_{N-2}\oplus P_1=O.
    \]
    If the first one holds, then SWITCH $P_{(s-1)k+r+1}$ and $P_{(s-1)k+r+2}$; if the second one holds, then SWITCH $P_1$ and $P_2$; if neither of the two
    holds, then do nothing.

       For any $i=1,\cdots,\frac{k-r-2}{2}$, similarly at most one of the following two equalities holds:
        \[
       P_{(s-1)k+r+2i+2}\oplus\cdots\oplus P_{N-4}\oplus\cdots\oplus P_{N-2}\oplus P_{1}\oplus \cdots\oplus P_{2i}=P_{(s-1)k+r+2i+2}\oplus P_{N-4}\oplus P_{sk+r-1}\oplus P_{N-2}=O,
    \]
    and
    \[
       P_{(s-1)k+r+2i+1}\oplus\cdots\oplus P_{N-4}\oplus\cdots\oplus P_{N-2}\oplus P_{1}\oplus \cdots\oplus P_{2i-1}= P_{N-4}\oplus P_{sk+r-1}\oplus P_{N-2}\oplus P_{2i+1}=O.
    \]
    If the first one holds, then SWITCH $P_{(s-1)k+r+2i+1}$ and $P_{(s-1)k+r+2i+2}$; if the second one holds, then SWITCH $P_{2i+1}$ and $P_{2i+2}$; if neither of the two
    holds, then do nothing.

    In the case that $r$ is odd, the algorithm is the same as the even case, check the sum of $k$  cyclically consecutive points and do the corresponding SWITCH operation.

    Continue the above example, if
    \[
      P_{10}\oplus P_{11}\oplus P_{15}\oplus P_{12}\oplus P_{13}\oplus P_{17}=P_{10}\oplus P_{15}\oplus P_{13}\oplus P_{17}=O,
    \]
    then SWITCH $P_{9}$ and $P_{10}$; and in this case, it is immediate that
    \[
    P_{11}\oplus P_{15}\oplus P_{12}\oplus P_{13}\oplus P_{17}\oplus P_1=P_{15}\oplus P_{13}\oplus P_{17}\oplus P_1\neq O,
    \]
    so we do not need to reorder $P_1$ and $P_2$, and so on.

    Using the above algorithm to rearrange the evaluation set to get a newly arranged evaluation set $D$, by Proposition~\ref{ECSSP}, there are no $k$ cyclically consecutive  points whose sum is $O$. And hence, the elliptic curve code $C_{\mathscr{L}}(D, G)$ is an MDS symbol-pair code with parameters $(N-3,d+2)_{q}$. So, similarly as above, by deleting pairs from the pre-evaluation set, we can obtain MDS symbol-pair codes with parameters $(n,d+2)_{q}$ where $n$ runs over all even integers $d+2<n\le N-3$.

        \item For odd $d$ and odd  $n:\,7\le d+2<n\le N-2$, in this case $k=N-2-d$ is even. Write $N-2=(k+1)s+r$ for some integers $s\ge 1$ and $0\le r\le k$.
    Take the pre-evaluation set
    \[
      D_0=\{P_{1},P_{2},\cdots,P_{N-3},P_{N-2}\}
    \]
    and arrange it as following
        \begin{equation*}
        \begin{array}{rl}
          D=&\{P_{1},\cdots,P_{k-1},P_{N-3},P_k,\cdots,P_{(s-1)k-1}, \\
           & P_{N-1-s},P_{(s-1)k},\cdots,P_{sk-1},P_{N-2},P_{sk}, P_{sk+1},\cdots,P_{sk+r}\}.
        \end{array}
    \end{equation*}
    If $r$ is even, then it is easy to see that by Proposition~\ref{ECSSP} there are no $k$ cyclically consecutive points whose sum is $O$.

    If $r$ is odd, then similarly as the case when $d$ and $n$ are even, there may be some $k$ cyclically consecutive points whose sum is $O$ in the tail sequence. In this case, we just need process the same algorithm in the case $3$ to obtain a rearranged evaluation set $D$ such that there are no $k$ cyclically consecutive points whose sum is $O$.

    And hence,  the elliptic curve code $C_{\mathscr{L}}(D, G)$ is an MDS symbol-pair code with parameters $(N-2,d+2)_{q}$. So, similarly as above, by deleting pairs from the pre-evaluation set, we can obtain MDS symbol-pair  codes with parameters $(n,d+2)_{q}$ where $n$ runs over all odd integers $7\le d+2<n\le N-2$.
    \end{enumerate}
In conclusion, in the case that $N=N(\f{q})$ is odd, for any $7\le d+2\le n\le N(\f{q})-3$, no matter whether $d$ is odd or even, there exists an MDS
symbol-pair code with parameters $(n,d+2)_{q}$.

Case (II): $N=N(\f{q})$ is even. The proof is the same. Note that there are one or three non-zero elements of order $2$ in the group $E(\f{q})$.
Using these elements in the setting of the pre-evaluation set, the remainder of the argument is analogous. We omit the details here.

So, by the discussion above, we complete the proof of the theorem.
\end{proof}

 \begin{remark}
 From the proof, we see that in some cases, the length of the MDS symbol-pair code constructed from elliptic curve can attain $N(\f{q})-2$ or $N(\f{q})-1$. We omit the detail of the statements in the theorem to get a clear description of our result. Also, note that there are other works devoted to constructing almost MDS  codes using curves {\rm \cite{BC}} besides elliptic curves. To construct MDS symbol-pair codes using these almost MDS codes, how to arrange the evaluation set becomes the difficult step.
\end{remark}

\section{Conclusion}\label{sec6}
In this paper, we first  give a sufficient condition for the existence of linear MDS symbol-pair codes over $\mathbb{F}_{q}$.
On this basis, we show that a linear MDS $(n,5)_q$ symbol-pair code over $\mathbb{F}_{q}$ exists if and only if the length $n$ ranges from $5$ to $q^2+q+1$. Later, we introduce a special configuration in projective geometry called ovoid, which allows us to derive $q$-ary linear MDS symbol-pair codes with $d=6$ and length ranging from $6$ to $q^{2}+1$. This is an interesting method and deserves further investigation since it works well for both $d=5$ and $d=6$, and it may work for larger pair-distance. With the help of elliptic curves, we show that we can construct linear MDS $(n,d+2)_{q}$ symbol-pair codes for any $n,d$ satisfying $7\le d+2\le n\le q+\lfloor 2\sqrt{q}\rfloor+\delta(q)-3$. Compared with the results listed in Table \ref{tab}, our results provide a much larger range of parameters.


\begin{thebibliography}{10}

\bibitem{BC}
E.~Ballico and A.~Cossidente.
\newblock Curves in projective spaces and almost {MDS} codes.
\newblock {\em Des. Codes Cryptogr.}, 24(2):233--237, 2001.

\bibitem{CB}
Y.~Cassuto and M.~Blaum.
\newblock Codes for symbol-pair read channels.
\newblock {\em IEEE Trans. Inform. Theory}, 57(12):8011--8020, 2011.

\bibitem{CL}
Y.~Cassuto and S.~Litsyn.
\newblock Symbol-pair codes: Algebraic constructions and asymptotic bounds.
\newblock In {\em IEEE Int. Symp. Inf. Theory}, pages 2348--2352, 2011.

\bibitem{CJKWY}
Y.~M. Chee, L.~Ji, H.~M. Kiah, C.~Wang, and J.~Yin.
\newblock Maximum distance separable codes for symbol-pair read channels.
\newblock {\em IEEE Trans. Inform. Theory}, 59(11):7259--7267, 2013.

\bibitem{Cheng08}
Q.~Cheng.
\newblock Hard problems of algebraic geometry codes.
\newblock {\em IEEE Trans. Inform. Theory}, 54(1):402--406, 2008.

\bibitem{Deuring41}
M.~Deuring.
\newblock Die {T}ypen der {M}ultiplikatorenringe elliptischer
  {F}unktionenk\"orper.
\newblock {\em Abh. Math. Sem. Univ. Hamburg}, 14(1):197--272, 1941.

\bibitem{Hall}
P.~Hall.
\newblock On representatives of subsets.
\newblock {\em J. Lond. Math. Soc.}, s1-10(1):26--30, 1935.

\bibitem{KZL}
X.~Kai, S.~Zhu, and P.~Li.
\newblock A construction of new {MDS} symbol-pair codes.
\newblock {\em IEEE Trans. Inform. Theory}, 61(11):5828--5834, 2015.

\bibitem{LWZ15}
J.~Li, D.~Wan, and J.~Zhang.
\newblock On the minimum distance of elliptic curve codes.
\newblock In {\em IEEE Int. Symp. Inf. Theory}, pages 2391--2395, 2015.

\bibitem{Mun92}
C.~Munuera.
\newblock On the main conjecture on geometric {MDS} codes.
\newblock {\em IEEE Trans. Inform. Theory}, 38(5):1573--1577, 1992.

\bibitem{P09}
S.~Payne.
\newblock Topics in finite geometry: ovals, ovoids and generalized quadrangles.
\newblock {\em UC Denver Course Notes}, 2009.

\bibitem{Silverman09}
J.~H. Silverman.
\newblock {\em The arithmetic of elliptic curves}, volume 106 of {\em Graduate
  Texts in Mathematics}.
\newblock Springer, Dordrecht, second edition, 2009.

\bibitem{Stichtenoth}
H.~Stichtenoth.
\newblock {\em Algebraic function fields and codes}, volume 254 of {\em
  Graduate Texts in Mathematics}.
\newblock Springer-Verlag, Berlin, second edition, 2009.

\bibitem{YBS}
E.~Yaakobi, J.~Bruck, and P.~H. Siegel.
\newblock Decoding of cyclic codes over symbol-pair read channels.
\newblock In {\em IEEE Int. Symp. Inf. Theory}, pages 2891--2895, 2012.

\bibitem{YBH16}
E.~Yaakobi, J.~Bruck, and P.~H. Siegel.
\newblock Constructions and decoding of cyclic codes over $b$-symbol read
  channels.
\newblock {\em IEEE Trans. Inform. Theory}, 62(4):1541--1551, 2016.

\bibitem{ZFW14}
J.~Zhang, F.~Fu, and D.~Wan.
\newblock Stopping sets of algebraic geometry codes.
\newblock {\em IEEE Trans. Inform. Theory}, 60(3):1488--1495, 2014.

\end{thebibliography}
\end{document}